\newenvironment{myproof}[1]{\noindent\hspace{2em}{\itshape Proof #1:}}{\hspace*{\fill}~\QED\par\endtrivlist\unskip}
\newcommand{\IAN}{\mathrm {IAN}}
\newcommand{\JD}{\mathrm {JD}}
\newcommand{\OPT}{\mathrm {OPT}}
\newcommand{\SIC}{\mathrm {SIC}}
\newcommand{\SSS}{{P}}
\newcommand{\JJJ}{{I}}
\newcommand{\III}{{I}}
\newcommand{\stsets}[1]{\mathbb{#1}}
\newcommand{\RR}{\stsets{R}}
\newcommand{\EE}{\stsets{E}}
\newtheorem{theorem}{Theorem}
\newtheorem{lemma}{Lemma}
\newcommand{\singlespacing}{\let\CS=\@currsize\renewcommand{\baselinestretch}{0.95}\tiny\CS}
\newcommand{\oneandahalfspacing}{\let\CS=\@currsize\renewcommand{\baselinestretch}{1.25}\tiny\CS}
\newcommand{\doublespacing}{\let\CS=\@currsize\renewcommand{\baselinestretch}{1.39}\tiny\CS}
\newcommand{\be}{\begin{equation}}
\newcommand{\ee}{\end{equation}}
\newcommand{\C}{\mathcal{C}}
\newcommand{\R}{\mathcal{R}}
\newcommand{\CN}{\mathcal{CN}}
\newcommand{\bc}{\begin{center}}
\newcommand{\ec}{\end{center}}
\newcommand{\bfl}{\begin{flushleft}}
\newcommand{\efl}{\end{flushleft}}
\newcommand{\beqa}{\begin{eqnarray}}
\newcommand{\eeqa}{\end{eqnarray}}
\newcommand{\beqan}{\begin{eqnarray*}}
\newcommand{\eeqan}{\end{eqnarray*}}
\newcommand{\beq}{\begin{equation}}
\newcommand{\eeq}{\end{equation}}
\renewcommand{\S}{{\cal S}}
\newcommand{\eps}{\epsilon}
\newtheorem{thm}{Theorem}
\newtheorem{fact}[thm]{Fact}
\renewcommand{\S}{{\cal S}}
\newcommand{\T}{{\cal T}}
\newcommand{\V}{{\cal V}}
\newcommand{\Rv}{{\bf R}}
\newcommand{\Xv}{{\bf X}}
\newcommand{\B}{{\cal B}}
\newcommand{\A}{{\cal A}}
\newcommand{\D}{{\cal D}}
\newcommand{\sym}{\mathrm{sym}}
\newcommand{\mh}{\hat m}
\newcommand{\Mh}{\hat M}
\newcommand{\E}{{\cal F}}
\newcommand{\lbk}{\underline{k}}
\newcommand{\bI}{{\bar I}}
\title{Interference Networks with Point-to-Point Codes}
\author{Francois Baccelli, Abbas El Gamal, and David Tse\thanks{F. Baccelli was a Miller Professor at UC
Berkeley when this research was initiated. 
His work is supported in part by a grant of the INRIA@SiliconValley programme.
The research of A. El Gamal is supported in part by DARPA ITMANET.
The research of D. Tse is supported in part by the National Science Foundation under grant 0722032 and by
the AFOSR under grant number FA9550-09-1-0317.}\\
INRIA-ENS, Stanford University, UC Berkeley}
\date{Frebruary 1, 2011}
\begin{document}
\maketitle
\begin{abstract}
The paper establishes the capacity region of the Gaussian interference channel with many transmitter-receiver pairs constrained to use  point-to-point codes. The capacity region is shown to be strictly larger in general than the achievable rate regions when treating interference as noise, using successive interference cancellation decoding, and
using joint decoding. The gains in coverage and achievable rate using the optimal decoder are analyzed in terms of ensemble averages using stochastic geometry. In a spatial network where the nodes are distributed according to a Poisson point process and the channel path loss exponent is $\beta > 2$, it is shown that the density of users that can be supported by treating interference as noise  can scale no faster than $B^{2/\beta}$ as the bandwidth $B$ grows, while the density of users can scale linearly with $B$ under optimal decoding.
\end{abstract}
\begin{keywords}
Network information theory, interference, 
successive interference cancelation, joint decoding, stochastic geometry, coverage, ad hoc network,
stochastic network, performance evaluation.
\end{keywords}
\section{Introduction}
Most wireless communication systems employ point-to-point codes with receivers that treat interference as noise (IAN). This architecture is also assumed in most wireless networking studies. While using point-to-point codes has several advantages, including leveraging many years of development of good codes and receiver design for the point-to-point AWGN channel and requiring no significant coordination between the transmitters, treating interference as noise is not necessarily the optimal decoding rule. Motivated by results in network information theory, recent wireless networking studies have considered point-to-point codes with successive interference cancellation decoding (SIC) (e.g., see~\cite{WeberSIC}), where each receiver decodes and cancels the interfering codewords from other transmitters one at a time before decoding the codeword from its tagged transmitter, and joint decoding~\cite{BJ09} (JD), where the receiver treats the network as a multiple access channel and decodes all the messages jointly.

In this paper, we ask a more fundamental question: given that transmitters use point-to-point codes, what is the performance achievable by the optimal decoding rule?
The context we consider is a wireless network of multiple transmitter-receiver pairs, modeled as a Gaussian interference channel. The first result we establish in this direction is the capacity region of this channel when all the transmitters use Gaussian point-to-point codes. We show that none of the above decoding rules alone is optimal. Rather, a combination of treating interference as noise and joint decoding is shown to be capacity-achieving. Second, we show that this result can be extended to the case when the transmitters are only constrained to use codes that are capacity-achieving for the point-to-point and multiple access channels, but not necessarily Gaussian-like.

We then specialize the results to find a simple formula for computing the symmetric capacity for these codes. Assuming a wireless network model with users distributed according to a spatial Poisson process, we use simulations to study the gain in achievable symmetric rate and coverage when the receivers use the optimal decoding rule (OPT) for point-to-point Gaussian codes as compared to treating interference as noise, successive cancellation decoding, and joint decoding. We then use stochastic geometry techniques to study the performance in the wideband limit, where a high density of users share a very wide bandwidth. Under a channel model where the attenuation with distance is of the form $r^{-\beta}$ with $\beta > 2$, it is shown that the density of users that can be supported by treating interference as noise  can scale no faster than $B^{2/\beta}$ as the bandwidth $B$ grows, while the density of users can scale {\em linearly} with $B$ under optimal decoding. For an attenuation of the form $(k+r)^{-\beta}$, the density of users scales linearly with $B$, but when the distance between the tagged transmitter and its receiver tends to infinity, the rate for OPT scales like the wideband capacity of a point-to-point Gaussian channel without interference.

\section{Capacity Region with Gaussian Point-to-point Codes}\label{Prelim}
\label{sec:it}
Consider a Gaussian interference channel with $K+1$ transmitter-receiver pairs, where each transmitter $j\in [0:K]$ wishes to send an independent message $M_j \in [1:2^{nR_j}]$ to its corresponding receiver $j$ at rate $R_j$ (in the unit of bits/s/Hz). The signal at receiver $j$ when the complex signals $\Xv= (X_0,X_1,\ldots,X_{K})$ are transmitted is
\[
Y_j = \sum_{l=0}^{K} g_{jl} X_l + Z_j\quad \text{for } j \in [0:K],
\]
where $g_{jl}$ are the complex channel gains and $Z_j \sim \CN(0,1)$  is a complex circularly symmetric Gaussian noise with an average power of $1$. We assume each transmitter is subject to the same power constraint $Q$ (in the unit of Watts/Hz). Define the received power from transmitter $l$ at receiver $j$ as $P_{jl} = |g_{jl}|^2 Q$. Without further constraints on the transmitters' codes, the capacity region of this channel is not known even for the two transmitter-receiver pair case (see~\cite{ElGamal-Kim} for known results on this problem). In this section we establish the capacity region using Gaussian generated point-to-point codes for an arbitrary number of transmitter-receiver pairs.

We define an $(n,2^{nR_0},\ldots,2^{nR_K})$ \emph{Gaussian point-to-point} (G-ptp) code~\footnote{By a code here we just mean the message set and the codebook.} to consist of a set of randomly and independently generated codewords  $x_j^n(m_j)=(x_{j_1},x_{j_2},\ldots,x_{j_n})(m_j)$, $m_j \in [1:2^{nR_j}]$, $j \in [0:K]$, each according to an i.i.d. $\CN(0,\sigma^2)$ sequence, for some $0 < \sigma^2 \le Q$. We assume each transmitter in the Gaussian interference channel uses such a code with each receiver $j \in [0:K]$ assigning an estimate $\mh_j(y_j^n) \in [1:2^{nR_j}]$ of message $m_j$ to each received sequence $y_j^n$. We define \emph{the probability of error for a G-ptp code} as
\[
p_n = \frac{1}{K+1} \sum_{j=0}^KP\{\Mh_j \ne M_j\}.
\]
We denote the average of this probability of error over G-ptp codes as $\bar p_n$. A rate tuple $\Rv=(R_0,R_1,\ldots,R_K)$ is said to be achievable via a sequence of $(n,2^{nR_0},\ldots,2^{nR_K})$ G-ptp codes if $\bar p_n \to 0$ as $n \to \infty$. The \emph{capacity region with G-ptp} is the closure of the set of achievable rate tuples $(R_0,R_1,\ldots,R_K)$.

\noindent{\em Remarks:}
\begin{enumerate}
\item Our definition of codes precludes the use of time sharing and power control (although in general one can use time sharing with ptp codes). The justification is that time sharing (or the special cases of time/frequency division) require additional coordination.
\item Note that if a rate tuple is achievable via a sequence of G-ptp codes then there exists a sequence of (deterministic) codes that achieves this rate tuple. We use the definition of achievability via the average probability of error over codes to simplify the proof of the converse. The results, however, can be shown to apply to sequences of G-ptp codes almost surely, and to an even more general class of (deterministic) codes in  Section~\ref{sec:mac_cap}.
\end{enumerate}

Let $\S$ be a nonempty subset of $[0:K]$ and $\S^c= [0:K] \setminus \S$ be its complement. Define $X_{\S}$ to be the vector of transmitted signals $X_l$ such that $l \in \S$, and define the sum $X_j (\S)= \sum_{l \in \S} g_{jl} X_l$. Similarly define $P_j(\S)= \sum_{l \in \S} P_{jl}$, $R_{\S}$, and $R(\S) = \sum_{l \in \S} R_l$.

Consider a Gaussian multiple access channel (MAC) with transmitters $X_{\S}$, receiver $Y_j$, where $j \in \S$, and additive Gaussian noise power $P_j (\S^c) + 1$. Recall that the capacity region $\A_j(\S)$ of this MAC is
\[
\left\{R_\S:\,
R(\T) \le C \left(\frac{P_j (\T)}{1 + P_j (\S^c)} \right) \text{ for every } \T \subseteq \S\right\},
\]
where $C(x) = \log (1 +x)$ for $x \ge 0$. All logarithms are base $2$ in this paper.

Now, define the rate regions
\[
\C_j = \{\Rv:\, R_{\S} \in \A_j(\S) \text{ for some } \S \text{ containing $j$} \}.
\]
and
\begin{equation}
\label{eq:cap_reg}
\C = \bigcap_{j=0}^{K} \C_j.
\end{equation}
One of the main results in this paper is establishing the capacity region of the Gaussian interference channel with G-ptp codes.

\begin{theorem}\label{thm:capacity}
The capacity region of the  Gaussian $K+1$ transmitter-receiver pair interference channel with G-ptp codes is $\C$.
\end{theorem}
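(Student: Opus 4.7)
The plan has two parts: achievability (showing $\C$ is contained in the capacity region) and converse (showing the capacity region is contained in $\C$).

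For achievability, given $\Rv \in \C$, at each receiver $j$ the inclusion $\Rv \in \C_j$ supplies a subset $\S_j \ni j$ with $R_{\S_j} \in \A_j(\S_j)$. I would have receiver $j$ run a joint typicality decoder for the messages $M_{\S_j}$, treating the signals from transmitters in $\S_j^c$ as Gaussian noise. Because the G-ptp codebooks are i.i.d. $\CN(0,\sigma^2)$, the aggregate interference from $\S_j^c$ at receiver $j$ is jointly Gaussian with power $P_j(\S_j^c)$, so the standard Gaussian MAC joint-typicality argument guarantees vanishing error in the decoding of $M_{\S_j}$ as long as $R_{\S_j} \in \A_j(\S_j)$. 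In particular $\hat M_j = M_j$ with probability approaching one at every receiver, and a union bound over $j$ gives $\bar p_n \to 0$.

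For the converse, I would fix $j$ and, using $\bar P\{\hat M_j \neq M_j\} \to 0$, identify a subset $\S_j^* \ni j$ such that receiver $j$ can reliably decode the entire tuple $M_{\S_j^*}$ from $Y_j^n$. Given such an $\S_j^*$, Fano's inequality yields $H(M_{\S_j^*}\mid Y_j^n) \leq n\epsilon_n$, and the classical Gaussian MAC converse applies: for every $\T \subseteq \S_j^*$,
\[
n R(\T) \leq I(X_\T^n; Y_j^n \mid X_{\S_j^* \setminus \T}^n) + n\epsilon_n \leq n\, C\!\left(\frac{P_j(\T)}{1+P_j((\S_j^*)^c)}\right) + n\epsilon_n,
\]
where the last inequality uses that $X_{(\S_j^*)^c}^n$ consists of i.i.d.\ $\CN(0,\sigma^2)$ entries contributing independent Gaussian interference of power $P_j((\S_j^*)^c)$, combined with the fact that the Gaussian distribution maximizes differential entropy under a power constraint. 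Sending $n\to\infty$ gives $R_{\S_j^*}\in \A_j(\S_j^*)$, hence $\Rv \in \C_j$; intersecting over $j$ yields $\Rv \in \C$.

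The main obstacle is the converse step of producing $\S_j^*$ with $j \in \S_j^*$ and arguing that all messages in $\S_j^*$ are jointly decodable at receiver $j$. This is where the i.i.d.\ Gaussian structure of the G-ptp codebooks is essential: because each codeword is i.i.d.\ $\CN(0,\sigma^2)$, any subset of transmitters left ``undecoded'' behaves indistinguishably from Gaussian noise of the corresponding power, which is exactly what matches the ``treating-as-noise'' form of $\A_j(\S)$. I expect the formal argument to proceed either by a maximal/greedy construction of $\S_j^*$ combined with a list-decoding analysis of the MAP decoder, or by an averaging/error-exponent argument exhibiting that whenever $\Rv \notin \C_j$ some MAC constraint is violated for every candidate $\S \ni j$, so that the codebook-averaged error probability is bounded away from zero.
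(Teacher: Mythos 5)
Your achievability argument is sound and essentially matches the paper's (the paper uses simultaneous non-unique decoding, which yields a formally larger region $\bar{\C}_0$ that turns out to coincide with $\C_0$, but direct joint decoding of $M_{\S_j}$ as you propose also works). The converse, however, has a genuine gap, and it sits exactly at the step you flag as the place where ``the i.i.d.\ Gaussian structure is essential.'' Your key inequality
\[
I(X_\T^n; Y_j^n \mid X_{\S_j^* \setminus \T}^n) \;\leq\; n\, C\!\left(\frac{P_j(\T)}{1+P_j((\S_j^*)^c)}\right)
\]
requires a \emph{lower} bound of the form $h\bigl(X_j((\S_j^*)^c)^n + Z_j^n\bigr) \geq n \log\bigl(\pi e (1 + P_j((\S_j^*)^c))\bigr) - n\epsilon_n$ on the differential entropy of the residual interference plus noise. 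Your justification --- that the undecoded transmitters ``behave indistinguishably from Gaussian noise of the corresponding power'' because their codewords are drawn i.i.d.\ $\CN(0,\sigma^2)$ --- is precisely the misconception the paper warns against: what matters is the distribution \emph{conditional on the realization of the interferers' codebooks}, and conditioned on a codebook the interference is supported on only $2^{nR}$ discrete points. If, say, a residual interferer's rate is small, then $h(g X^n + Z^n) \approx nR + n\log(\pi e)$, far below the Gaussian value $n\log(\pi e(1+I))$, and your inequality fails. So the bound cannot hold for an arbitrary choice of $\S_j^*$; it can only hold when no further messages are decodable.

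The missing idea is therefore twofold. First, $\S_j^*$ must be chosen \emph{maximal} (after decoding $M_j$, keep decoding and cancelling every subset of interferers whose rates lie in the corresponding MAC region, until no subset of the remainder is decodable). Second --- and this is the real technical content --- one must prove that when the residual rate vector lies outside \emph{every} MAC region $\A(\S)$ formed from the leftover interferers, the residual interference plus noise is asymptotically i.i.d.\ Gaussian in the entropy sense. For $K=1$ the paper does this by splitting $Z = U+V$, noting the interferer communicates reliably over the reduced channel $W = g_1X_1 + U$ at essentially its capacity, lower-bounding $\bar h(W^n)$ via Fano, and then applying the conditional entropy power inequality to add back $V$. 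For general $K$ this becomes Lemma~\ref{lem:Gaussian}, proved by induction on the number of transmitters together with a geometric fact about which faces of the MAC regions are exposed on the boundary of $\B$ (Fact~\ref{boundary}). Neither your ``list-decoding analysis of the MAP decoder'' nor your ``averaging/error-exponent'' sketch supplies this entropy lower bound, and without it the converse does not close.
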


By symmetry of the capacity expression, we only need to establish achievability and the converse for the rate region $\C_0$, which ensures reliable decoding of transmitter $0$'s message at receiver $0$. Hence from this point onward, we focus on receiver $0$. We will refer to this receiver and its corresponding transmitter $0$ as \emph{tagged}. We also refer to other transmitters as {\em interferers}. We relabel the signal from the tagged receiver, its gains, and additive noise as
\[
Y = \sum_{l=0}^K g_l X_l  +Z.
\]
We also relabel the received power from the tagged  transmitter $0$ as $P_0$ and the received power from interferer $j$, $j\ge 1$, as $I_j$ (for interference). For any subset of interferers $\T$, we denote $I(\T)$ as the sum of the received power from these interferers. We will also drop the index $0$ from the notations $\A_0(\S)$ and $X_0(\S)$.

For clarity of presentation, first consider the case of $K=1$. Here the signal of the tagged receiver is
\begin{align*}
Y &= g_{0} X_0 + g_{1} X_1 + Z.
\end{align*}
For this receiver, there are two subsets to consider, $\S = \{0\}$ and $\S = \{0,1\}$. The region
$\A(\{0\})$ is the set of rate pairs $(R_0,R_1)$ such that
\[
R_0 \le C \left( \frac{P_0}{1 + I_1} \right),
\]
and the region $\A(\{0,1\})$ is the set of rate pairs $(R_0,R_1)$ such that
\begin{align*}
R_0 &\le C(P_0),\\
R_1 &\le C(I_1),\\
R_0+R_1 &\le C(P_0+I_1).
\end{align*}
Hence, the region $\C_0$ for the tagged receiver is the union of these two regions.

It is interesting to compare $\C_0$ to the achievable rate regions for other schemes that use point-to-point codes. Define the rate regions:
\begin{align*}
\R_{\rm IAN} & =  \A(\{0\}),\\
\R_{\rm SIC} & =  \left\{(R_0,R_1): R_0 \le C(P_0), 
R_1 \le C\Big(\frac{I_1}{1+P_0}\Big) \right\},\\
\R_{\rm JD} & =  \A(\{0,1\})~.
\end{align*}
The region $\R_{\rm IAN}$ is achieved by a receiver that decodes the tagged transmitter's message while treating  interference as Gaussian noise. The region $\R_{\rm SIC}$ is achieved by the successive interference cancellation receiver; the interferer's message is first decoded, treating the tagged transmitter's signal as Gaussian noise with power $\SSS$, and then the message from the tagged receiver is decoded after canceling the interferer's signal. The region $\R_{\rm JD}$ is the two transmitter-receiver pair Gaussian MAC capacity region. It is the set of achievable rates when the receiver insists on correctly decoding both messages, which is the achievable region using joint decoding in Blomer and Jindal \cite{BJ09}.

It is not difficult to see that the following relationships between the regions hold (see Figure~\ref{fig:cap_reg}):
\begin{align*}
\R_{\rm IAN} & \subset \C_0,\\
\R_{\rm SIC} &\subset \R_{\rm JD} \subset \C_0,\\
\C_0 &= \R_{\rm IAN} \cup \R_{\rm JD}.
\end{align*}
\begin{figure}[h]
\begin{center}
\small
\psfrag{r1}[b]{$R_1$}
\psfrag{r2}[l]{$R_0$}
\psfrag{a}[c]{$C_1\;\;\,$}
\psfrag{b}[c]{$C_{10}\;\;\;\,$}
\psfrag{c}[t]{$C_{01}$}
\psfrag{d}[t]{$C_0$}
\includegraphics[scale=0.45]{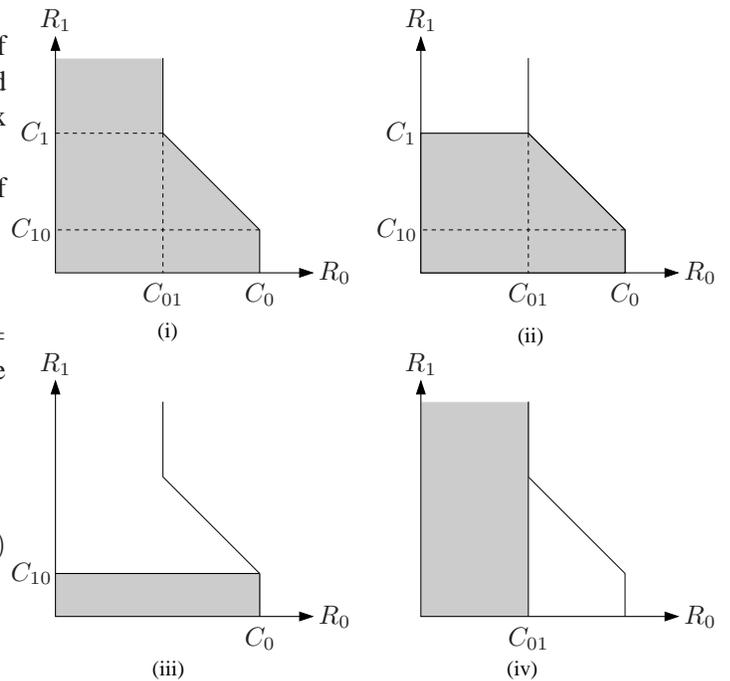}
\end{center}
\caption{$\C_0$ is the shaded region in Figure (i). The $\R_{\rm JD} $
region is depicted on Figure (ii).
$\R_{\rm SIC}$ is on Figure (iii) and $\R_{\rm IAN}$ is on (iv).
$C_0=C(P_0)$, $C_1=C(I_1)$, $C_{01}=C(P_0/(1+I_1))$, $C_{10}=C(I_1/(1+P_0))$.}
\label{fig:cap_reg}
\end{figure}
Note that the last relationship above says that the receiver can do no better than treating interference as Gaussian noise or jointly decoding the messages from the tagged transmitter and the interferer.

In the following, we first establish the capacity region for the case $K=1$, and then extend the result to arbitrary $K$. In Section \ref{sec:mac_cap}, we also show that our results extend to the class of \emph{MAC capacity-achieving codes}.
\subsection{Proof of Theorem~\ref{thm:capacity} for $K=1$ }

\noindent
{\em Proof of Achievability.} The prove the achievability of any rate pair in the interior of $\C_0$, we use Gaussian ptp codes with average power $Q(1- \delta)$ and joint typicality decoding as in~\cite{Cover--Thomas}. Further, we use simultaneous decoding~\cite{ElGamal-Kim} in which receiver $0$ declares that the message $\mh_0$ is sent if it is the unique message such that $(x_0^n(\mh_0),y^n)$ is jointly typical or $(x_0^n(\mh_0),x_1^n(\mh_1),y^n)$ is jointly typical \emph{for some} $m_1 \in [1:2^{nR_1}]$. A straightforward analysis of the average probability of error shows that $\bar p \to 0$ as $n \to \infty$ if either
\begin{equation}
\label{eq:ian}
R_0 < C\left( \frac{P_0 }{1 + I_1}\right),
\end{equation}
or
\begin{align*}
R_0 &< C(P_0),\\
R_0+R_1 &< C(P_0+I_1).
\end{align*}
The first constraint (\ref{eq:ian}) is $\A(\{0\})$, the IAN region. Denote the region defined by the second set of constraints by $\E(\{0,1\})$; it is the same as the MAC region $\A(\{0,1\})$ but with the constraint on $R_1$ removed. Hence, the resulting achievable rate region $\bar{C}_0 = \A(\{0\}) \cup \E(\{0,1\})$ appears to be larger than $\C_0= \A(\{0\}) \cup \A(\{0,1\})$. It is easy to see from Figure~\ref{fig:cap_reg}, however, that it actually coincides with $\C_0$. Hence, receiver $0$ can correctly decode $M_1$ if treating interference as noise fails but simultaneous decoding succeeds even though it does not require it. We establish the converse for the original characterization of $\C_0$, hence providing an alternative proof that the two regions coincide.

\noindent{\em Remark:} Although we presented the decoding rule as a two-step procedure, since the receiver knows the transmission rates, it already knows whether to apply IAN or simultaneous decoding.

\noindent
{\em Proof of the converse.} To prove the converse, suppose we are given a sequence of random G-ptp codes and decoders with rate pair $(R_0,R_1)$ and such that the average probability of error approaches $0$ as $n \to \infty$.  We want to show that $(R_0,R_1) \in \C_0$. Consider two cases:
\begin{enumerate}
\item $R_1 < C(I_1)$: Under this condition and by the assumption that the tagged receiver can reliably decode its message, the tagged receiver can cancel off the received signal from the tagged transmitter and then reliably decode the message from transmitter 1. Hence $(R_0,R_1)$ is in the capacity region of the MAC with transmitters $(X_0,X_1)$ and receiver $Y$, and hence in $\C_0$.

\item
$R_1 \ge C(I_1)$: Fix an $\eps >0$, and let $Z = U + V$, where $U$ and $V$ are independent Gaussian noise components with variances $N$ and $1-N$, respectively, such that
\[
C\Big(\frac{I_1}{N}\Big) = R_1 + \eps.
\]
Consider the AWGN channel
\beq
W =g_1 X_1 + U.
\eeq
Since we are assuming G-ptp codes and $R_1 < C(I_1/N)$, the average probability of decoding error over this channel approaches zero  as $n \to \infty$. Hence, by Fano's inequality, the mutual information over a block of $n$ symbols, averaged over G-ptp codes, is
\begin{eqnarray*}
\bar I(X_1^n; W^n ) & = &\bar h(X_1^n) -\bar h(X_1^n\mid W^n)\\
& \ge &  nR_1 - n \delta_n,
\end{eqnarray*}
where $\delta_n \to 0$ as $n\to \infty$. Denoting by $\bar h (W^n)$ the differential entropy of $W^n$ averaged over the G-ptp codes, this implies that
\begin{align*}
\bar h (W^n) & \ge   n R_1 - n\delta_n + h(U^n)\\
& =   nR_1- n \delta_n  + n \log( \pi e N)\\
& = n C\Big(\frac{I_1}{N}\Big)- n \eps - n \delta_n + n \log(\pi e N)\\
& =  n  \log (\pi e (I_1 + N)) - n \eps -n \delta_n.
\end{align*}
Now, let $\tilde{W}^n = W^n + V^n$.
By the conditional entropy power inequality, we have
\begin{align*}
2^{\frac{1}{n} \bar h(\tilde{W}^n)} & \ge   2^{\frac{1}{n} \bar h(W^n)} + 2^{\frac{1}{n}h(V^n)}\\
& \ge   2^{\log(\pi e (I_1 + N)) -  \delta_n - \eps} +
\pi e (1-N)\\
& =  \pi e (I_1 + N)2^{ -  \delta_n - \eps} + \pi e (1-N).
\end{align*}
Hence,
\begin{align*}
\bar h(\tilde W^n) & \ge  n \log \left( \pi e (I_1 + N)2^{ -  \delta_n - \eps}+ \pi e (1- N)\right).
\end{align*}
The fact that $\bar h(Y^n)\le n \log( \pi e (P_0 + I_1 +1))$ and the last lower bound
give an upper bound on the average mutual information for the tagged transmitter-receiver pair
\begin{align*}
\bar I(X_0^n;Y^n)
& =  \bar h(Y^n) - \bar h(\tilde{W}^n)\\
& \hspace{-.5cm} \le  n \log( \pi e (P_0 + I_1 +1)) \\ &
\hspace{-.5cm} -n \log( \pi e (I_1 + N)2^{-  \delta_n - \eps} + \pi e (1-N)).
\end{align*}
Since this is true for all $\eps > 0$, we have
\begin{align*}
 \bar I(X_0^n;Y^n)
& \le n \log(\pi e (P_0 + I_1+1))  \\ &
- n \log \left(\pi e (I_1 + N)2^{ -  \delta_n} + \pi e (1-N) \right)\\
& \le  n C\Big(\frac{P_0}{1+I_1}\Big) + n \tilde{\delta}_n.
\end{align*}
Since we assume the tagged receiver can decode its intended message, $R_0 < C(P_0/(1+I_1))$, and hence $(R_0,R_1) \in \C_0$. This completes the proof of Theorem~\ref{thm:capacity} for $K=1$.
\end{enumerate}

\noindent{\em Remarks:}
\begin{enumerate}
\item What the above proof  showed is that if the message of transmitter $0$ is reliably decoded, then either: (1) the interferer 's message can be jointly decoded as well, in which case the rate vector $\Rv$ is in the 2-transmitter MAC capacity region, or (2) the interference  plus the background noise is close to i.i.d. Gaussian, in which case decoding transmitter $0$'s message treating transmitter $1$'s interference plus background noise as Gaussian is optimal.

\item   One may think that since the interferer uses a Gaussian random code, the interference must be Gaussian and hence the interference plus background noise must also be Gaussian. This thinking is misguided, however, since what is important to the communication problem are the statistics of the interference plus noise {\em conditional on a realization of the interferer's random code}.  Given a realization of the code, the interference is discrete, coming from a code, and hence it is not in general true that the interference plus noise is close to i.i.d. Gaussian. What we showed in the above converse is that this holds when the message from the interferer cannot be jointly decoded with the message from transmitter $0$.
\end{enumerate}

\subsection{Proof of Theorem~\ref{thm:capacity} for arbitrary $K$ }
Now, consider the general case with $K+1$ transmitter-receiver pairs.

\noindent
{\em Proof of achievability.} The proof is a straightforward generalization of the proof for $K=1$, and the condition for the probability of error to approach $0$ is that the rate vector $\Rv$ lies in the region:
\begin{equation}
\label{eq:sim}
\bar{\C}_0 : = \{ \Rv: R_\S \in \E(\S) \text{ for some subset $\S$ with $0\in\S$}\},
\end{equation}
where
\begin{align*}
\E(\S) & =   \left\{\Rv: R(\T \cup \{0\})  < C\left ( \frac{P_0 + I(\T)}{1 + I(\S^c)}\right)\right.\\
& \hspace{1cm}\left. \text{ for every } \T \subseteq \S \setminus \{0\} \right\}
\end{align*}
is the \emph{augmented} MAC region for the subset of transmitters $\S$ treating the transmitters in $\S^c$ as Gaussian noise.

As in the $K=1$ case, the region $\bar{\C}_0$ appears to be larger than $\C_0$. We again establish the converse for the original characterization of $\C_0$, hence showing that $\bar{\C}_0$  coincides with $\C_0$.

\noindent
{\em Proof of the converse.} The proof for the $K=1$ case identifies, for a given a rate vector, a {\em maximal} set of interferers whose messages  can be jointly decoded with the tagged transmitter's message. This set depends on the given rates of the interferer; if $R_1 < C(I_1)$, the set is $\{1\}$, otherwise it is $\emptyset$. The key to the proof is to show that whichever the case may be, the residual interference created by the transmitters whose messages are {\em not} decoded plus the background noise must be asymptotically i.i.d. Gaussian.  We generalize this proof to an arbitrary number of interferers. In this general setting, however, {\em explicitly} identifying a maximal set of interferers whose messages can be jointly decoded with the tagged transmitter's message is a combinatorially difficult task. Instead, we identify it {\em existentially}.

Suppose the transmission rate vector is $\Rv$ and the average probability of error for the tagged receiver approaches zero as $n \to \infty$. Consider the set of subsets of interferers
\[
\D = \{ \T:  0 \not \in \T,\, R_\T \in \A(\T) \}.
\]
Intuitively, these are all the subsets of interferers whose messages can be jointly decoded after decoding $M_0$ while treating the other transmitted signals as Gaussian noise. Let $\T^*$ be a {\em maximal} set in $\D$, i.e., there is no larger subset $\T \in \D$ that contains $\T^*$. Since the message $M_0$ is decodable by the assumption of the converse, the tagged receiver can cancel off the tagged transmitter's signal. Next, the messages of the interferers in $\T^*$ can be decoded, treating the interference from the remaining interferers plus the background noise as Gaussian. This is because by assumption $R_{\T^*} \in \A(\T^*)$ and all interferers are using G-ptp codes. After canceling off the signals from the interferers in $\T^*$, the tagged receiver is left with interferers in $(\T^* \cup \{0\})^c$. Since no further messages can be decoded treating the rest as Gaussian noise (by the maximality of $\T^*$), it follows that for any subset $\S \subset (\T^* \cup \{0\})^c$, $R_\S$ is not in the capacity region of the MAC with transmitters in $\S$ and Gaussian noise with power $I((\T^* \cup \{0\})^c \setminus \S) + 1$. Let
\[
W = X((\T^* \cup \{0\})^c) + Z.
\]

In the $K=1$ scenario, $\T^*$ is either $\{1\}$ or $\emptyset$. In the first case, both messages are decoded, hence the power of the residual interference plus that of the background noise is automatically Gaussian. In the second case, the interferer's message is not decoded, and our earlier argument shows the interferer must be communicating above the capacity of the point-to-point Gaussian channel to receiver $0$. Hence the aggregate interference plus noise must be asymptotically i.i.d. Gaussian. In the general scenario with $K$ interferers, there may be more than one residual interferer left after decoding a maximal set $\T^*$. The following lemma, which is proved in the following subsection, shows that this situation generalizes appropriately.
\begin{lemma}
\label{lem:Gaussian}

Consider a $k$-transmitter MAC
\[
Y =\sum_{j=1}^k g_jX_j + Z,
\]
where the received power from transmitter $j$ is $P_j$ and $Z \sim \CN(0,1)$. Let
\begin{equation}
\label{eq:B}
\B = \{\Rv: R_\S \in \A(\S) \text{ for some nonempty } \S\}.
\end{equation}
If the transmitters use G-ptp codes at rate vector $\Rv$ and $\Rv \notin \B$, then
\[
\lim_{n \to \infty} \frac{1}{n} \bar h(Y^n ) = \log \left(\pi e \left (\sum_{j=1}^k P_j+1\right)\right),
\]
that is, the received sequence $Y^n$ is asymptotically i.i.d. Gaussian.
\end{lemma}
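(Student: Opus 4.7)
The plan is to proceed by induction on the number of transmitters $k$. For the base case $k=1$, the condition $\Rv\notin\B$ simplifies to $R_1>C(P_1)$, and the argument follows Case~2 of the $K=1$ converse of Theorem~\ref{thm:capacity}: split $Z=U+V$ with $U\sim\CN(0,N)$ and $V\sim\CN(0,1-N)$ independent, choosing $N=P_1/(2^{R_1+\eps}-1)<1$; apply Fano to the auxiliary point-to-point channel $W=g_1X_1+U$ to obtain $\bar h(W^n)/n\ge\log(\pi e(P_1+N))-\eps-\delta_n$; and apply the complex conditional entropy power inequality to $Y=W+V$ to get $\bar h(Y^n)/n\ge\log(\pi e(P_1+1))-o(1)$. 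Combined with the trivial Gaussian upper bound $\bar h(Y^n)\le n\log(\pi e(P_1+1))$, this yields the stated limit.

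For the inductive step, a preliminary claim is useful: $\Rv\notin\B$ implies the sum-rate inequality $R([1:k])>C(P([1:k]))$. This follows by an absorption argument using the multiplicative identity
\[
(1+P(\T'))\bigl(1+P(\T'')/(1+P(\T'))\bigr)=1+P(\T'\cup\T''),
\]
for disjoint $\T',\T''\subseteq[1:k]$. One starts with a violator $\T'\subseteq[1:k]$ of $\A([1:k])$ (which exists since $\Rv\notin\A([1:k])$), then applies the hypothesis $\Rv\notin\A([1:k]\setminus\T')$ to the nonempty complement to obtain a further violator $\T''\subseteq[1:k]\setminus\T'$ with $R(\T'')>C(P(\T'')/(1+P(\T')))$; summing the two strict rate inequalities and using the identity gives $R(\T'\cup\T'')>C(P(\T'\cup\T''))$, and iterating exhausts $[1:k]$.

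Given this, I would choose a noise split $Z=U+V$ with $U\sim\CN(0,N)$, $V\sim\CN(0,1-N)$, and a subset $\T\subseteq[1:k]$ such that (a) on the auxiliary channel $W^n=\sum_{j\in\T}g_jX_j^n+U^n$ the messages in $\T$ are jointly decodable, so that Fano gives $\bar h(W^n)/n\ge\log(\pi e(P(\T)+N))-O(\eps+\delta_n)$, and (b) the residual sub-MAC $\tilde Z^n=V^n+\sum_{j\in\T^c}g_jX_j^n$ satisfies the analogous non-decodability hypothesis, so that the inductive hypothesis delivers $\bar h(\tilde Z^n)/n\to\log(\pi e(P(\T^c)+1-N))$. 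The complex EPI applied to $Y^n=W^n+\tilde Z^n$ then combines the two asymptotics into $\bar h(Y^n)/n\ge\log(\pi e(P([1:k])+1))-o(1)$, matching the Gaussian upper bound.

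The hard part is verifying (b): reducing the background noise from $1$ to $1-N<1$ enlarges the sub-MAC's achievability regions, so $\Rv\notin\B$ does not mechanically descend to the residual. The natural fix is to take $\T$ to be a minimal subset with $R(\T)>C(P(\T))$ and to set $N=P(\T)/(2^{R(\T)+\eps}-1)$; the minimality gives slack in the subset constraints of $\T$ needed for Fano in~(a), while the absorption identity above is precisely what converts each non-decodability constraint the original hypothesis provides for a subset of $\T^c$ (against an interferer block that includes $\T$) into the corresponding constraint in the residual sub-MAC (against noise $1-N$ plus the remaining interferers). Once this noise-budget accounting is carried through the induction, the remainder is a routine chase of Fano, EPI, and the $\eps,\delta_n\to 0$ limit.
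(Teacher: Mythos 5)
Your toolkit is the paper's (noise splitting, Fano on a reduced-noise auxiliary channel, the entropy power inequality, induction on $k$), your base case is the paper's, and your preliminary claim that $\Rv\notin\B$ forces $R([1:k])>C(P([1:k]))$ is true and correctly proved by the absorption identity. Step (a) is also sound, though for a reason worth stating: the subset constraints of the MAC $X(\T)+U$ are $R(\V)\le C(P(\V)/N)$ for $\V\subseteq\T$ (the other users of $\T$ are decoded, not treated as noise), so minimality, which gives $R(\V)\le C(P(\V))$ for proper $\V$, together with $N<1$, does yield joint decodability of $\T$ on that auxiliary channel.

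The genuine gap is step (b), and the proposed choice of $\T$ and $N$ cannot be repaired by any absorption argument. Take $k=2$, $P_1=P_2=1$, $R_1=2$, $R_2=1$. Then $\Rv\notin\B$ (indeed $R_1>C(1/2)$, $R_2>C(1/2)$, and $R_1>C(1)$ kills $\A(\{1,2\})$), the unique minimal sum-rate violator is $\T=\{1\}$, and $N=1/(2^{2+\eps}-1)\approx 1/3$. The residual sub-MAC is $X_2+V$ with noise $1-N\approx 2/3$, and $R_2=1<C\bigl(1/(1-N)\bigr)\approx 1.32$: transmitter $2$ is strictly \emph{below} capacity there, the inductive hypothesis fails, and $\tilde Z^n$ is genuinely not asymptotically Gaussian, since $\bar h(\tilde Z^n)\le nR_2+h(V^n)$ gives $2^{\bar h(\tilde Z^n)/n}\le 2^{R_2}\pi e(1-N)=4\pi e/3<\pi e(P_2+1-N)$. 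The EPI route then tops out at $2^{\bar h(Y^n)/n}\ge 4\pi e/3+4\pi e/3=8\pi e/3<3\pi e$, short of the claimed limit. The reason no bookkeeping saves this is that the only violation $\Rv\notin\B$ supplies for $\{2\}$ is $R_2>C(P_2/(1+P_1))$, which leans on $X_1$ being noise; once $X_1$ is decoded and the noise shrunk, transmitter $2$ becomes decodable and its $nR_2$ bits must be counted explicitly in $\bar h(Y^n)$ rather than buried in a Gaussian residual. The paper's proof uses a different selection rule precisely to avoid this: it picks $N$ as the \emph{critical} noise level at which $\Rv$ first touches the boundary of the union region $\tilde{\B}$, takes a \emph{maximal} $\S$ with $R_\S\in\tilde{\A}(\S)$ (here $N=2/7$ and $\S=\{1,2\}$, not $\{1\}$), and invokes Fact~\ref{boundary} (only sum-rate faces are exposed on the boundary of $\B$) to place $R_\S$ on the sum-rate face; maximality of $\S$, rather than minimality of a violator, is what makes non-decodability descend to the residual. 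Note also the structural difference: the paper assigns the small noise $U$ to the residual $X(\S^c)+U$, proves that Gaussian by induction, and only then treats $\S$ as a super-transmitter at sum capacity riding on that Gaussian residual, instead of splitting the noise budget between the two halves as you do.
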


Lemma \ref{lem:Gaussian}  shows that the interference after decoding the interferers in $\T^* \cup \{0\}$ plus the background noise is asymptotically i.i.d. Gaussian. Hence, $R_{\T^* \cup \{0\}} \in \A(\T^* \cup \{0\})$, and we can conclude that $\Rv \in \C_0$. This completes the converse proof of Theorem \ref{thm:capacity} for arbitrary $K$.

\subsection{Proof of Lemma~\ref{lem:Gaussian}}
 The proof needs the following fact about $\B$. Recall that the boundary of the MAC capacity region consists of multiple faces. We refer to the one corresponding to the constraint on the total sum rate as the {\em sum rate face}.

\begin{fact}
\label{boundary}
Let $\Rv$ be a rate vector such that  $R_\S$ is on the boundary of $\A(\S)$ for some $\S$ but not on its sum-rate face. Then $\Rv$ cannot be on the boundary of $\B$. In other words, the non-sum-rate faces of the MAC regions $\A(\S)$ are never exposed on the boundary of $\B$.
\end{fact}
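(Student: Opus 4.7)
The plan is to show that under the hypothesis of the fact, $\Rv$ lies in the \emph{interior} of $\B$. Since $\A(\S)$ is closed, $R_\S\in\A(\S)$ and so already $\Rv\in\B$; what I need is an open neighborhood of $\Rv$ inside $\B$. By hypothesis there exists a nonempty proper subset $\T\subsetneq\S$ on which the constraint $R(\T)=C(P(\T)/(1+P(\S^c)))$ is tight, while the sum-rate constraint $R(\S)<C(P(\S)/(1+P(\S^c)))$ is strict (because $R_\S$ is not on the sum-rate face).

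The main technical step is an inheritance identity that lets me ``peel off'' $\T$. Using the elementary relation
\begin{equation*}
C\!\left(\tfrac{a+b}{1+c}\right)-C\!\left(\tfrac{a}{1+c}\right)=C\!\left(\tfrac{b}{1+c+a}\right)
\end{equation*}
with $a=P(\T)$, $c=P(\S^c)$, and subtracting the tight constraint on $\T$ from the MAC constraint on $\T\cup\U$ for each $\U\subseteq\S\setminus\T$, I obtain $R(\U)\le C(P(\U)/(1+P((\S\setminus\T)^c)))$, which is precisely the defining inequality of $\A(\S\setminus\T)$ on $\U$. Hence $R_{\S\setminus\T}\in\A(\S\setminus\T)$, and the specialization $\U=\S\setminus\T$ combined with the \emph{strict} sum-rate on $\S$ yields a strict sum-rate for $\A(\S\setminus\T)$ as well.

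I then iterate: set $\S_0=\S$, and while $R_{\S_i}$ lies on a non-sum-rate face of $\A(\S_i)$, pick a witnessing $\T_i\subsetneq\S_i$ and set $\S_{i+1}=\S_i\setminus\T_i$. Because $|\S_{i+1}|<|\S_i|$ the process terminates at some nonempty $\S_*$, and the exit condition together with the propagated strict sum-rate forces $R_{\S_*}$ to lie on no face of $\A(\S_*)$ at all, i.e., in its interior (the singleton case is automatic, since singletons have no non-sum-rate face). The set $\{\Rv':R'_{\S_*}\in\A(\S_*)\}\subseteq\B$ is a cylinder in $\real^k$ whose $\real^{|\S_*|}$-slices equal $\A(\S_*)$, so an interior point of $\A(\S_*)$ is also an interior point of the cylinder. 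Hence $\Rv$ has an open $\real^k$-neighborhood contained in $\B$, and cannot lie on its boundary. The only real obstacle is the combinatorial bookkeeping of the iteration and verifying that the strict sum-rate propagates through each peeling step; once those are in hand the algebra is a single line and termination is finite.
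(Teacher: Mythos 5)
Your proof is correct and rests on the same key step as the paper's: subtracting the tight constraint on $\T$ and using the identity $C\bigl(\frac{a+b}{1+c}\bigr)-C\bigl(\frac{a}{1+c}\bigr)=C\bigl(\frac{b}{1+c+a}\bigr)$ to conclude that $R_{\S\setminus\T}$ lies in $\A(\S\setminus\T)$, whose cylinder is contained in $\B$. The only difference is organizational: the paper chooses $\T$ to be a maximal tight set so that a single peeling step already places $R_{\S\setminus\T}$ in the strict interior of $\A(\S\setminus\T)$, whereas you peel off an arbitrary tight set and iterate, propagating the strict sum-rate inequality until termination.
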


Figure \ref{fig:faces} depicts $\B$ for $K=2$. Here, the boundary of $\B$ consists of three segments, each of which is a sum-rate face of a MAC region. The two non-sum-rate faces of $\A(\{1,2\})$ are not exposed.

\begin{figure*}[tbp]
\begin{center}
\psfrag{r1}[b]{$R_2$}
\psfrag{r2}[l]{$R_1$}
\psfrag{a}[r]{$C_2$}
\psfrag{d}[t]{$C_1$}
\psfrag{r}[l]{$(R_1,R_2)$}
\psfrag{r3}[l]{$(R_1,R_2)$}
\psfrag{r4}[b]{$(R_1,R_2)$}
\includegraphics[scale=0.5]{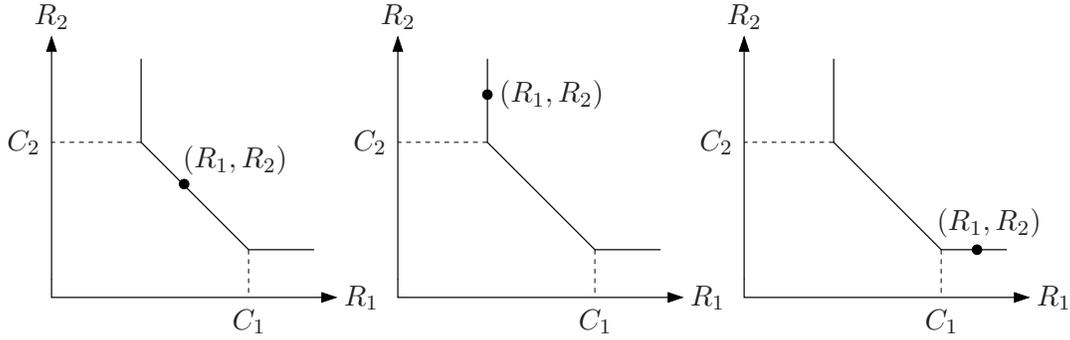}
\end{center}
\caption{The boundary of $\B$ for $K=2$ has three segments, all of which are sum-rate faces. A rate-tuple on the boundary of $\B$ can lie on one of them.}
\label{fig:faces}
\end{figure*}

\begin{myproof}{of Fact 1} Let $\Rv$ be a rate vector such that $R_\S$ is on the boundary of $\A(\S)$ for some $\S$ but not on its sum rate face. Then there is a  subset  $\T$ of $\S$ such that
\begin{equation}
\label{eq:tight}
R(\T) = C\left(\frac{P(\T)}{1+ P(\S^c)}\right)
\end{equation}
and for all subsets $\V$ strictly containing $\T$ and inside $\S$,
\beq
\label{eq:loose}
R(\V) < C\left(\frac{P(\V)}{1+ P(\S^c)}\right).
\eeq
Subtracting (\ref{eq:tight}) from (\ref{eq:loose}) implies that for all such sets $\V$,
\[
R(\V \setminus \T) < C\left(\frac{P(\V\setminus \T)}{1 + P(\T) + P(\S^c)}\right).
\]
This implies that $R_{\S\setminus \T}$ is in the strict interior of $\A(\S \setminus \T)$, Hence, $\Rv$ cannot be on the boundary of $\B$. This completes the proof of Fact 1.
\end{myproof}

\begin{myproof}{of Lemma~\ref{lem:Gaussian}} The proof is by induction on the number of transmitters $k$.

$k=1$: this just says that for a point-to-point Gaussian channel, if we transmit at a rate above capacity using a G-ptp code, then the output is Gaussian. This is a well-known fact.

Assume the lemma holds for all $j < k$. Consider the case with $k$ transmitters.

Express $Z = U+V$, where $U$ and $V$ are independent Gaussians with variances $N$ and $1-N$, respectively, where $N$ is chosen such that $\Rv$ is on the boundary of $\tilde{\B}$ for the MAC
\[
\tilde W = \sum_{j=1}^k g_j X_j + U.
\]
Here, $\tilde{\B}$ is the same as $\B$ except that the background noise power $1$ is replaced by $N$. Let ${\cal E}$ be the collection of all subsets $\S \subset [1:k]$ for which $R_\S \in \tilde{\A}(\S)$ ($\tilde{\A}(\S)$ is the same as $\A(\S)$ except that the background noise power $1$ is replaced by $N$). Pick a maximal subset $\S$ from that collection.
 By Fact~\ref{boundary}, $R_\S$ must be on the sum-rate face of $\tilde{\A}(\S)$. The MAC can be decomposed as
\[
\tilde W = X(\S) + X(\S^c) + U.
\]
By the maximality of $\S$, no further transmitted messages can be decoded beyond the ones for the transmitters in $\S$ (otherwise, there would exist a bigger subset $\S'$ containing $\S$ and for which $R_{\S'} \in \tilde{\A}(\S')$). This implies in particular that for any subset $\T \subset S^c$, the rate vector $R_{\T}$ cannot be in the region $\tilde{\A}(\T)$; otherwise if such a $\T$ exists, the receiver could have first decoded the messages of transmitters in $\S$, cancelled their signals, and then decoded the messages of the transmitters in $\T$, treating the residual interference plus noise as Gaussian. Hence if we consider the smaller MAC
\[
 W = X(\S^c) + U,
 \]
we can apply the induction hypothesis to show that $W^n$ is asymptotically i.i.d. Gaussian. So now we have a Gaussian MAC for transmitters in $\S$
\[
\tilde W  = X(\S) + W
\]
and since the rate vector $R_\S$ lies on the sum rate boundary of this MAC, we now have a situation of a super-transmitter, i.e., a combination of all transmitters in $\S$,  sending at the capacity of this Gaussian channel. Using a very similar argument as in the $K=1$ proof, one can show that $W^n$ is asymptotically i.i.d. Gaussian. Adding back the removed noise $V$ yields the desired conclusion. This completes the proof of Lemma~\ref{lem:Gaussian}.
\end{myproof}

\section{Capacity Region with MAC-Capacity-Achieving Codes}\label{sec:mac_cap}

The converse in Theorem \ref{thm:capacity} says that if the transmitters use Gaussian random codes, then one can do no better than treating interference as Gaussian noise or joint decoding. The present section shows that this converse result generalizes to a certain class of (deterministic) ``MAC-capacity-achieving" codes, to be defined precisely below. We first focus on the two-transmitter-receiver pair case and then generalize to the $K+1$-transmitter case.

An $(n,2^{nR})$ (deterministic) single-user code satisfying the transmit power constraint $Q$ is said
to achieve a rate $R$ over a point-to-point Gaussian channel $Y=gX + Z$ if the probability of error $p_n \to 0$ as the block length $n \to \infty$. An $(n,2^{nR})$ code is said to be {\em point-to-point (ptp) capacity-achieving} if it achieves a rate of $R$ over {\em every} point-to-point Gaussian channel with capacity greater than $R$.

Now consider the two transmitter-receiver pair Gaussian interference channel.
 A rate-pair $(R_0,R_1)$ is said to be achievable over the interference channel via a sequence of ptp-capacity-achieving codes if there exists a sequence of such codes for each transmitter such that the probability of error
 \[
p_n = \frac{1}{2} \left( P\{\hat{M}_0 \ne M_0\} + P\{\hat{M}_1 \ne M_1\}\right)
\]
approaches $0$ as $n \to \infty$.  The capacity region with ptp-capacity-achieving codes is the closure of the set of achievable rates. The theorem below is a counterpart to the converse in Theorem \ref{thm:capacity} for G-ptp codes.

\begin{theorem}
\label{thm:2-Tx}
The capacity region of the two transmitter-receiver pair interference channel with ptp-capacity achieving codes is no larger than $\C$,
as defined in (\ref{eq:cap_reg}) for $K=1$.
\end{theorem}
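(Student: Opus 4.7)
By symmetry of $\C = \C_0 \cap \C_1$, it suffices to show that every achievable pair $(R_0,R_1)$ lies in $\C_0$, and, as in the $K=1$ converse of Theorem~\ref{thm:capacity}, I would split on the relative size of $R_1$ and $C(I_1)$. If $R_1 < C(I_1)$, then since receiver $0$ decodes $M_0$ reliably, it can subtract $g_0 X_0^n(M_0)$ from $Y^n$ and face the point-to-point Gaussian channel $g_1 X_1 + Z$ of capacity $C(I_1) > R_1$. Because transmitter $1$'s code is ptp-capacity-achieving, $M_1$ is decodable there as well, so concatenating the two decoders produces a rule at receiver $0$ that recovers both messages with vanishing error probability. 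Hence $(R_0,R_1)$ is achievable on the two-user Gaussian MAC with inputs $(X_0,X_1)$, output $Y$, and per-user power $Q$, and the standard Gaussian MAC converse places it in $\A(\{0,1\}) \subseteq \C_0$.

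The harder case is $R_1 \ge C(I_1)$, where I must show $R_0 \le C(P_0/(1+I_1))$. My plan is to run the entropy-power-inequality argument from the proof of Theorem~\ref{thm:capacity} almost verbatim. Fix $\epsilon>0$ and decompose $Z = U + V$ into independent circularly symmetric Gaussians of variances $N$ and $1-N$, choosing $N$ so that $C(I_1/N) = R_1 + \epsilon$; the hypothesis $R_1 \ge C(I_1)$ guarantees $N \le 1$. The auxiliary channel $W = g_1 X_1 + U$ is then a point-to-point Gaussian channel of capacity $R_1 + \epsilon > R_1$, so by the ptp-capacity-achieving property transmitter $1$'s code has vanishing error probability on it. Fano's inequality gives $I(M_1; W^n) \ge n R_1 - n\delta_n$, and combined with $h(W^n \mid M_1) = h(U^n) = n\log(\pi e N)$ this yields $h(W^n) \ge n\log(\pi e(I_1+N)) - n(\epsilon+\delta_n)$. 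Setting $\tilde W^n = W^n + V^n = g_1 X_1^n + Z^n$, the conditional EPI lower-bounds $h(\tilde W^n)$ just as in Theorem~\ref{thm:capacity}, the maximum-entropy bound gives $h(Y^n) \le n\log(\pi e(P_0 + I_1 + 1))$ from the per-letter $Q$-power constraint on $X_0$, and Fano applied to $M_0$ converts $I(X_0^n; Y^n) = h(Y^n) - h(\tilde W^n)$ into an upper bound on $R_0$. Sending $n \to \infty$ and then $\epsilon \to 0$ produces $R_0 \le C(P_0/(1+I_1))$, so $(R_0,R_1) \in \A(\{0\}) \subseteq \C_0$.

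The only substantive change from the proof of Theorem~\ref{thm:capacity} is the Fano step on the auxiliary channel $W$: there, the Gaussian random-coding analysis automatically gave vanishing error probability whenever the rate was below the Gaussian capacity, while here I invoke the defining property of ptp-capacity-achieving codes, which asserts exactly decodability on every point-to-point Gaussian channel of strictly larger capacity. The main obstacle---or rather the thing that needs to be checked carefully---is that this is the \emph{only} place the Gaussian-code hypothesis enters: the EPI chain, the maximum-entropy upper bound, and Fano for $M_0$ depend only on the noise decomposition and on second-order properties of the inputs, so they go through unchanged for deterministic codes, and the codebook averages $\bar h(\cdot)$ used in Theorem~\ref{thm:capacity} simply disappear.
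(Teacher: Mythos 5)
Your proof is correct and follows essentially the same route as the paper: the paper's proof of Theorem~\ref{thm:2-Tx} is exactly the observation that the $K=1$ converse of Theorem~\ref{thm:capacity} uses the Gaussian-random-code hypothesis only through decodability of transmitter $1$'s code on point-to-point Gaussian channels of capacity exceeding $R_1$ (once for the cancellation step when $R_1 < C(I_1)$, once for the auxiliary channel $W = g_1X_1+U$ in the EPI argument), which is precisely the ptp-capacity-achieving property. You have simply written out that observation in full, correctly in both places.
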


\begin{proof}
The result follows from the observation that in the proof of the converse for Theorem \ref{thm:capacity}, the only property we used about the G-ptp codes is that the average decoding error probability of the interferer's message after canceling the message of the intended transmitter goes to zero whenever $R_1 < C(I_1)$. This property remains true if the interferer uses a ptp-capacity-achieving code instead of a G-ptp code.

\end{proof}

Theorem \ref{thm:2-Tx} says that as long as the codes of the transmitters are designed to optimize point-to-point performance, the region $\C$ is the fundamental limit on their performance over the interference channel. This is true even if the codes do not ``look like" randomly generated Gaussian codes.

Now let us consider the $K+1$-transmitter interference channel for general $K$. Is $\C$ still an outer bound to the capacity region if all the transmitters use ptp-capacity-achieving codes? The answer is no. A counter-example can be found in \cite{BPT10} (Section IIB), which considers a 3-transmitter many-to-one interference channel with interference occurring only at receiver $0$. There, it is shown that if each of the transmitters uses a lattice code, which is ptp-capacity-achieving, one can do better than {\em both} joint decoding  all transmitters' messages and decoding just transmitter $0$'s message treating the rest of the signal as Gaussian noise at receiver $0$. The key is to use lattice codes for transmitter $1$ and $2$, and have them {\em align} at receiver $0$ so that the two interferers appear as one interferer. Hence, it is no longer necessary for receiver $0$ to decode the messages of {\em both} interferers in order to decode the message from transmitter $0$; decoding the {\em sum} of the two interferers is sufficient. At the same time, treating the interference from $1$ and $2$ as Gaussian noise is also strictly sub-optimal.

In this counter-example, the transmitters' codes are ptp-capacity-achieving but not "MAC capacity-achieving" in the sense that receiver $0$ cannot jointly decode the individual messages of the interferers. A careful examination of the proof of the converse in Theorem \ref{thm:capacity} for general $K$ reveals that the converse in fact holds whenever the codes of the transmitters satisfy such a  MAC-capacity-achieving property.

Consider a $k$-transmitter Gaussian MAC
\[
Y=\sum_{j=1}^k g_j X_j + Z
\]
and a subset $\S \subset [1:k]$. A $(n,2^{nR_1}, \ldots, 2^{nR_k})$ (deterministic) code for this MAC, where each transmitter satisfies the same transmit power constraint $Q$, is said
to achieve the rate-tuple $R_\S$ over the MAC if the probability of error
\[
p_n(\S) = \frac{1}{|\S|} \sum_{j\in \S} P\{\Mh_j \ne M_j\}
\]
approaches $0$ as $n \to \infty$. An $(n,2^{nR_1}, \ldots, 2^{nR_k})$ code is said to be {\em MAC-capacity-achieving} if for every $\S \subset [1:k]$, it achieves a rate $R_\S$ over {\em every} Gaussian MAC whose capacity region $\A(\S)$ contains $R_\S$. Recall that the region $\A(\S)$ is the capacity region of the MAC with transmitters $X_\S$  and the signals from the rest of the transmitters treated as Gaussian noise. Thus this definition says that a MAC capacity-achieving code is good enough to achieve this performance for any subset $\S$ of transmitters.

Now consider the $K+1$ transmitter-receiver pair Gaussian interference channel.
A rate-tuple $\Rv$ is said to be achievable on the interference channel via a sequence of MAC-capacity-achieving codes if there exists a sequence of MAC-capacity-achieving codes for every subset containing $K$ transmitters such  that the probability of error
\[
p_n = \frac{1}{K+1} \sum_{j=0}^K P\{\hat{M}_j \not = M_j\}
\]
approaches zero as $n \to \infty$.  The capacity region with MAC-capacity-achieving codes is the closure of all such rates.

\begin{theorem}
The capacity region of the Gaussian $K+1$-transmitter interference channel with MAC-capacity achieving codes is no larger than $\C$, as defined in (\ref{eq:cap_reg}).
\end{theorem}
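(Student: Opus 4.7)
The plan is to rerun the converse of Theorem~\ref{thm:capacity} for general $K$ essentially verbatim, replacing each use of Gaussian random-code structure by the MAC-capacity-achieving hypothesis. Inspection of that converse reveals that the Gaussianity of codewords was invoked only in two places: (i) to conclude that, whenever $R_\T\in\A(\T)$ for some subset $\T$ of interferers, the messages of those interferers can be jointly decoded after canceling the tagged signal; and (ii) through Lemma~\ref{lem:Gaussian}, which says that when no such $\T$ can be enlarged, the aggregate of the remaining interferers plus background noise is asymptotically i.i.d.\ Gaussian. Property (i) is now supplied directly by the definition of MAC-capacity-achieving codes, so the task reduces to extending Lemma~\ref{lem:Gaussian} to this setting.

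I would first state and prove a version of Lemma~\ref{lem:Gaussian} for MAC-capacity-achieving codes, following the same induction on the number of transmitters $k$. The base case $k=1$ is precisely the entropy-power-inequality calculation from the converse of Theorem~\ref{thm:capacity} for $K=1$: splitting $Z=U+V$ with $C(I_1/\mathrm{Var}(U))=R_1+\epsilon$, the code must decode reliably over $W=g_1X_1+U$, and Fano's inequality together with the conditional EPI yields the asymptotic Gaussianity of $Y^n$. The only property of the code used is successful decoding over a point-to-point Gaussian channel of capacity strictly larger than the transmission rate, which is the ptp-capacity-achieving property and hence a special case of MAC-capacity-achieving. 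The inductive step is structurally identical to the original: split $Z=U+V$ so that $\Rv$ lies on the boundary of $\tilde{\B}$, invoke Fact~\ref{boundary} to land on a sum-rate face of $\tilde{\A}(\S)$ for a maximal $\S$, apply the induction hypothesis to the sub-MAC on $\S^c$ (whose codes remain MAC-capacity-achieving by restriction), and then treat $X(\S)$ as a single super-transmitter operating on the sum-rate boundary of a Gaussian MAC whose noise is the asymptotically Gaussian $X(\S^c)^n+U^n$, to which the base-case EPI argument applies.

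With the extended lemma in hand, the rest of the converse is a direct transcription of the G-ptp case. For each receiver $j$ and any achievable $\Rv$, form the analog of $\D$ for receiver $j$ and pick a maximal $\T^*$ inside it; the extended lemma applied to the sub-MAC of transmitters in $(\T^*\cup\{j\})^c$ (where no further subset can be decoded, by maximality) shows that the residual interference plus $Z_j$ is asymptotically i.i.d.\ Gaussian of total power $I((\T^*\cup\{j\})^c)+1$. This lets the decoder jointly decode the interferers in $\T^*$ by the MAC-capacity-achieving property, since the effective MAC is asymptotically the Gaussian MAC whose capacity region $\A(\T^*)$ contains $R_{\T^*}$; combined with the decodability of $M_j$, this forces $R_{\T^*\cup\{j\}}\in\A(\T^*\cup\{j\})$, so $\Rv\in\C_j$. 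Intersecting over $j$ gives $\Rv\in\C$.

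The main obstacle I expect is the mild mismatch between the hypothesis built into the definition of MAC-capacity-achieving codes---which guarantees decoding over \emph{exactly} Gaussian MACs---and the residual interference encountered in the converse, which is only \emph{asymptotically} Gaussian. Some care is required to verify that the asymptotic Gaussianity delivered by the extended lemma (an entropy-rate statement) is strong enough to keep the decoding guarantees of MAC-capacity-achieving codes operative, and similarly that the noise splittings $Z=U+V$ introduced during the induction always leave the relevant sub-MAC capacity regions strictly containing the rate subvectors being decoded. Once this technical interface is pinned down, the proof is a clean rewriting of the G-ptp converse.
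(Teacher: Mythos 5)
Your proposal is correct and takes essentially the same route as the paper, which disposes of the theorem in a single sentence: ``the only property used about the G-ptp codes in the converse of Theorem~\ref{thm:capacity} is precisely the MAC-capacity-achieving property.'' You have, in effect, unpacked that sentence into the checklist of places where Gaussianity of the codebook is invoked---decodability in step (i) and Lemma~\ref{lem:Gaussian} in step (ii)---and verified that each is subsumed by the MAC-capacity-achieving hypothesis, with the $k=1$ base case of the extended lemma being just the ptp-capacity-achieving argument already isolated in Theorem~\ref{thm:2-Tx}. The ``technical interface'' you flag at the end---that the extended Lemma~\ref{lem:Gaussian} delivers only asymptotic (entropy-rate) Gaussianity of the residual noise, whereas the MAC-capacity-achieving definition guarantees decoding over \emph{exactly} Gaussian MACs---is a genuine subtlety, but note that it is already latent in the paper's own proof of Lemma~\ref{lem:Gaussian} for G-ptp codes and in the paper's Theorem~\ref{thm:capacity} converse, where the Fano step for the super-transmitter is applied to a channel whose noise is itself only asymptotically Gaussian; the paper does not resolve it explicitly either, handling it through the $\eps$-perturbation of the noise split $Z=U+V$. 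So this is not a gap peculiar to your argument, and your sketch of how to close it (verifying that the noise splittings keep the relevant sub-MAC regions strictly containing the rate subvectors) is the right thing to do.
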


\begin{proof}

The result follows from the observation that in the proof of the converse in Theorem \ref{thm:capacity}, the only property that was used about the G-ptp codes of the transmitters  is precisely the MAC-capacity-achieving property defined above.
\end{proof}

The counter-example above shows that one can indeed do better than the region $\C$, for example using interference alignment. Interference alignment, however, requires careful coordination and accurate channel knowledge at the transmitters. On the other hand, one can satisfy the MAC-capacity-achieving property without the need of such careful coordination. So, if one takes the MAC-capacity-achieving property as a definition of  lack of coordination between the transmitters, then the above theorem delineates the fundamental limit to the performance on the interference channel if the transmitters  are not coordinated.


\section{Symmetric Rate}\label{secSD}

We specialize the results in the previous sections to the case when all messages have the same rate $R$. This will help us compare the network performance of the optimal decoder to other decoders for Gaussian ptp codes.
Throughout the section, we assume that $I_1 \ge I_2 \cdots \ge I_K$, and define $I[j:k] = \sum_{i=j}^k I_i$ and $I = \sum_{i=1}^K I_i$. When $K = \infty$, we will assume that $I$ is finite, hence $I_i \to 0$ as  $i \to \infty$.

\subsection{Optimal Decoder}\label{secSDO}
Focusing again on the tagged receiver $0$, define the \emph{symmetric rate} $R_\sym$  as the supremum over $R$ such that $(R,R,\ldots,R) \in \C_0$.
We can express the symmetric rate $R_\sym$ as the solution of a simple optimization problem.

\begin{lemma}\label{lem:symmetric}
The symmetric rate
under G-ptp codes is
\begin{equation}
\label{eq:infinity}
R_\sym=
\max_{k \in [0:K]} \min_{l \in [0:k]} \frac{1}{l+1} C\left( \frac{P_0 + I[k-l+1:k]}{1 + I[k+1:K]}\right).
\end{equation}
\end{lemma}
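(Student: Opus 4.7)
The plan is to first reduce the optimization defining $R_\sym$ to a maximization over nested canonical subsets $\S_k := \{0, 1, \ldots, k\}$, and then to argue that among the MAC constraints defining $\A(\S_k)$, only those indexed by subsets $\T$ containing $0$ can be binding at the maximizing $k$. By Theorem~\ref{thm:capacity}, $R_\sym$ is the supremum of $R$ such that $(R, R, \ldots, R)|_\S \in \A(\S)$ for some $\S \ni 0$. For each fixed cardinality $|\S| = k+1$, a swap argument establishes that the optimal choice is $\S_k$: replacing an index in $\S$ by a larger-$I_i$ index from $\S^c$ strictly decreases $I(\S^c)$ and, for every $m$, weakly increases $\min_{|\T|=m,\,\T\subseteq\S} \sum_{i\in\T} P_i$ (taking $P_0$ to be the tagged power and $P_i = I_i$ for $i \ge 1$), so every MAC inequality becomes looser. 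Thus $R_\sym = \max_{k \in [0:K]} \rho_k$, where $\rho_k$ denotes the symmetric-rate corner of $\A(\S_k)$.

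For $\S = \S_k$, the binding subset of size $l+1$ containing $0$ is $\{0, k-l+1, \ldots, k\}$ (the tagged and the $l$ weakest interferers in $\S_k$), and the binding subset of size $m$ not containing $0$ is $\{k-m+1, \ldots, k\}$. Therefore $\rho_k = \min(\tilde \rho_k, \tilde{\tilde \rho}_k)$ with
\[
\tilde \rho_k = \min_{l \in [0:k]} \tfrac{1}{l+1} C\!\left(\tfrac{P_0 + I[k-l+1:k]}{1 + I[k+1:K]}\right)
\]
equal to the quantity inside the outer max in~(\ref{eq:infinity}), and
\[
\tilde{\tilde \rho}_k = \min_{m \in [1:k]} \tfrac{1}{m} C\!\left(\tfrac{I[k-m+1:k]}{1 + I[k+1:K]}\right)
\]
collecting the constraints from subsets $\T \not\ni 0$. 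The easy direction $R_\sym \le \max_k \tilde \rho_k$ is immediate.

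The crux is the reverse inequality, which I would obtain by showing that at any maximizer $k^*$ of $\tilde \rho_k$ the second family is non-binding, i.e.\ $\tilde{\tilde \rho}_{k^*} \ge \tilde \rho_{k^*}$. Write $r := \tilde \rho_{k^*}$ and assume, for contradiction, that $C(I[k^*-m^*+1:k^*]/(1+I[k^*+1:K])) < m^* r$ for some $m^* \in [1:k^*]$. Set $k := k^* - m^*$, $b := I[k^*-m^*+1:k^*]$, $c := I[k^*+1:K]$ (so that $I[k+1:K] = b + c$), and, for arbitrary $l \in [0:k]$, let $a := P_0 + I[k-l+1:k]$. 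Then $a + b = P_0 + I[k^*-(l+m^*)+1:k^*]$ matches the numerator of the $\tilde \rho_{k^*}$ term at index $l + m^* \in [0:k^*]$, so the definition of $\tilde \rho_{k^*}$ gives $C((a+b)/(1+c)) \ge (l+m^*+1) r$. Subtracting the assumed bound $C(b/(1+c)) < m^* r$ and using the elementary identity
\[
C\!\left(\tfrac{a}{1+b+c}\right) = C\!\left(\tfrac{a+b}{1+c}\right) - C\!\left(\tfrac{b}{1+c}\right),
\]
which is immediate from $C(x) = \log(1+x)$, yields $\tfrac{1}{l+1} C(a/(1+b+c)) > r$ for every $l \in [0:k]$, hence $\tilde \rho_k > r$, contradicting the maximality of $k^*$. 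The main obstacle I expect is not any single calculation but the bookkeeping between the two index ranges; once the correspondence "size $l+1$ at $k$ $\leftrightarrow$ size $l+m^*+1$ at $k^*$" is set up correctly, the rest collapses to a one-line application of the $C$-identity.
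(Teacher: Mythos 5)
Your proof is correct, and it takes a genuinely different route from the paper's. The paper starts from the \emph{reduced} characterization $\bar\C_0$ in~(\ref{eq:sim}), built from the augmented regions $\E(\S)$ whose constraints are indexed only by subsets of the form $\T\cup\{0\}$ with $\T\subseteq\S\setminus\{0\}$; since $\bar\C_0=\C_0$ was established as a by-product of the achievability and converse of Theorem~\ref{thm:capacity}, the constraints from subsets not containing $0$ are gone before the optimization even begins, and the paper goes straight to $R_\sym([0:k])=\min_{\T\subset[1:k]}\tfrac{1}{|\T|+1}C\bigl((P_0+I(\T))/(1+I[k+1:K])\bigr)$, followed by the same two reductions you use (nested $\S_k$ by monotonicity of the region in the powers, then weakest interferers for each size of $\T$). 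You instead retain the full set of MAC inequalities of $\A(\S_k)$, split them into $\tilde\rho_k$ (subsets containing $0$) and $\tilde{\tilde\rho}_k$ (subsets avoiding $0$), and prove via the telescoping identity $C(a/(1+b+c))=C((a+b)/(1+c))-C(b/(1+c))$ and the index shift $l\mapsto l+m^*$ that $\tilde{\tilde\rho}_{k^*}\ge\tilde\rho_{k^*}$ at any maximizer $k^*$ of $\tilde\rho_k$. This amounts to a self-contained algebraic reproof, restricted to the symmetric-rate point, of the coincidence $\bar\C_0=\C_0$, independent of the information-theoretic converse. Your route is longer, but it makes explicit \emph{why} the non-$0$ constraints are never exposed on the relevant boundary, which the paper establishes only implicitly through Theorem~\ref{thm:capacity}.
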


\begin{proof}
From the reduced characterization of $\C_0$ in~\eqref{eq:sim}, we have
\[
R_\sym = \max_{\S: 0\in \S} R_\sym(\S) = \max_{k \in [0:K]} R_\sym([0:k]),
\]
where $R_\sym(\S)$ is the symmetric rate of the region $\E(\S)$. The second equality follows from the observation that the reduced MAC region $\E(\S)$ is monotonically increasing in the received powers from the transmitters in $\S$ and decreasing in the interference power from transmitters in $\S^c$. Hence, among all subsets $\S$ of size $k+1$, the one with the largest symmetric rate is $[0:k]$ (the one with the highest powered transmitters and lowest powered interferers).

Taking into account all $2^{k}$ constraints of the region $\E([0:k])$, we have
\begin{align*}
&  R_\sym([0:k]) \\
& = \min_{\T \subset [1:k]} \frac{1}{|\T|+1}C\left(\frac{P_0+I(\T)}{1+ I[k+1:K]}\right)\\
& = \min_{l \in [0:k]} \min_{\T \subset [1:k],|T| =l}
\frac{1}{l+1}C\left(\frac{P_0+ I(\T)}{1+ I[k+1:K]}\right).
\end{align*}
The desired result (\ref{eq:infinity}) now follows from the fact that among all the subsets $\T$ of size $l$, the one with the smallest total power $I(\T)$ is $[k-l+1:k]$.
\end{proof}

\subsection{Other Decoders}
We will use the following nomenclature  for the rest of the paper:
\begin{itemize}
\item IAN refers to treating interference as noise decoding. The condition for IAN is
\[
R < C\left (\frac{P_0}{ 1 + I} \right).
\]
\item SIC($k$) refers to successive interference cancellation in which the tagged receiver sequentially decodes and cancels  the signals from the $k$ strongest transmitters treating other signals as noise and then decodes the message from the tagged transmitter while treating the remaining signals as Gaussian noise. The conditions for SIC are
\begin{align*}
\label{eq:sic_cond}
R &< C\left (\frac{I_l}{1 + P_0 + I[l+1:K]} \right) \text{for } l \in [1:k],\\
R &< C\left (\frac{P_0}{1 + I[k+1:K]} \right).
\end{align*}
\item JD($k$) refers to joint decoding of the messages of the first $k+1$ transmitters and treating the rest as Gaussian noise. The conditions for JD($k$) are
\begin{equation}
\label{eq:JD_cond}
R < \frac{1}{l+1} C\left( \frac{P_0 + I[k-l+1:k]}{1 + I[k+1:K]}\right) \text{for } l \in [0:k].
\end{equation}
The JD($k$) conditions are not monotonic in $k$, that is, the fact that JD($k$) holds
neither implies that JD($k'$) holds for $k'<k$ nor for $k'>k$ in general.
\item OPT($k$) refers to the optimal decoder used in the proof of Theorem \ref{thm:capacity}
if there were only $k$ interferers. OPT($K$) or simply OPT refers to the optimal decoding rule.
The conditions for OPT($K$) are $R<R_\sym$ with
$R_\sym$ given by (\ref{eq:infinity}).
Since the condition for OPT is the union of the JD($l$) conditions for $0\le l\leq k$,
if OPT($k$) holds, so does OPT($k'$) for all $k'>k$.
\end{itemize}

\subsection{Number of Interferer Messages Decoded}

Lemma \ref{lem:symmetric} shows that, for $K$ finite, the optimal decoding strategy
is to use $\JD(k_{\mathrm {opt}})$ with
\begin{equation}
\label{eq:optimalk}
k_{\mathrm {opt}}= {\mathrm {argmax}_{k \in [0:K]}} \xi(k),
\end{equation}
where
\begin{equation}
\label{eq:xifunction}
\xi(k)=\min_{l \in [0:k]} \frac 1 {l+1}
C\left( \frac{P_0 + I[k-l+1:k]}{1 + I[k+1:K]}\right)
\end{equation}
provided the argmax in question is uniquely defined.

The following lemma is focused on the case $K=\infty$, which will be considered
in the next sections, and where one may fear that the maximum is not defined
in (\ref{eq:infinity}), i.e., the argmax in (\ref{eq:optimalk}) is not defined. Fortunately, this is not the case.

\begin{lemma}\label{lem:kstar}
If $K=\infty$, $P_0>0$ and $I<\infty$, then $k_{\mathrm {opt}}<\infty$.
\end{lemma}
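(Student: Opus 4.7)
The plan is to show that $\xi(k) \to 0$ as $k \to \infty$ while $\xi(0) > 0$, so the supremum of $\xi$ over $k$ is attained on a finite prefix $[0:K_0]$ and hence the argmax is finite.

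First I would evaluate $\xi(0)$: the only choice is $l=0$, giving
\[
\xi(0) \;=\; C\!\left(\frac{P_0}{1 + I}\right) \;>\; 0,
\]
since $P_0 > 0$ and $I < \infty$ by hypothesis.

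Next I would produce an upper bound on $\xi(k)$ for large $k$ by selecting a specific $l$ in the minimum in \eqref{eq:xifunction}. The natural choice is $l = k$, which spreads the cost across all $k+1$ decoded messages:
\[
\xi(k) \;\le\; \frac{1}{k+1}\, C\!\left(\frac{P_0 + I[1{:}k]}{1 + I[k+1{:}\infty]}\right)
\;\le\; \frac{1}{k+1}\, C(P_0 + I),
\]
where I use $I[1{:}k] \le I$ and drop the denominator term (which is $\ge 1$). Since $I < \infty$, the right-hand side tends to $0$ as $k \to \infty$.

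Combining the two bounds, choose $K_0$ large enough that $\tfrac{1}{k+1}C(P_0 + I) < \xi(0)$ for all $k > K_0$; such a $K_0$ exists because $\xi(0)$ is a fixed positive constant. Then for every $k > K_0$ we have $\xi(k) < \xi(0)$, so the supremum of $\xi$ over $[0{:}\infty)$ equals the maximum of $\xi$ over the finite set $[0{:}K_0]$. This maximum is attained (breaking ties by taking the smallest $k$, say), and the resulting $k_{\mathrm{opt}}$ lies in $[0{:}K_0]$, hence is finite.

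There is really no serious obstacle here: the entire argument rests on the fact that the prefactor $1/(l+1)$ in $\xi(k)$ forces any strategy that decodes many messages to pay a penalty $O(1/k)$, while the $l=0$ strategy already yields a strictly positive rate whenever $P_0 > 0$ and $I < \infty$. The one point to double-check is that choosing $l = k$ inside the minimum in \eqref{eq:xifunction} is legitimate (it is, since $l$ ranges over $[0{:}k]$), so the upper bound on $\xi(k)$ is valid for every $k \ge 0$.
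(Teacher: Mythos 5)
Your proposal is correct and follows essentially the same route as the paper: bound $\xi(k)$ above by $\tfrac{1}{k+1}C(P_0+I)\to 0$ and observe $\xi(0)=C\bigl(P_0/(1+I)\bigr)>0$, so the maximizer must lie in a finite prefix. The only cosmetic difference is that you obtain the bound by picking $l=k$ in the minimum, while the paper bounds every term of the minimum by $\tfrac{1}{l+1}C(P_0+I)$ and then minimizes; the resulting inequality is identical.
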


\begin{proof}
We have
\begin{align*}
0\le \xi(k) & \le   \min_{l \in [0:k]}  \frac{1}{l+1} C(P_0+I) 
= \frac{1}{k+1} C(P_0+I).
\end{align*}
Hence $k\to \xi(k)$ is a positive function bounded from above by a function that tends to 0 when $k$ tends to infinity. The values where $\xi$ is maximal are then all finite unless it is 0 everywhere. But this is not the case since our assumptions on $P_0$ and $I$ imply that $\xi(0)>0$.
\end{proof}

The following lemma will be used later.

\begin{lemma}\label{cor1}
Let
\[
\lbk = \min \{ k \ge 1\text{ such that } I_k < P_0 \}.
\]
Then, a sufficient condition for achievability by OPT at rate $R$ is that
\begin{equation}
\label{eq:boundlp}
\lbk R < C\left(\frac{\lbk P_0}{1 + I[\lbk:\infty]} \right).
\end{equation}
Further, if this condition holds, then the conditions for JD($\lbk -1$) are met.
\end{lemma}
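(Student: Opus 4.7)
The plan is to verify the second assertion of the lemma first---namely, that the inequality in (\ref{eq:boundlp}) forces every one of the JD$(\lbk-1)$ constraints in (\ref{eq:JD_cond}) to hold---and then to deduce the first (OPT achievability) assertion from Lemma~\ref{lem:symmetric}, since $R_\sym = \max_{k\ge 0}\xi(k) \ge \xi(\lbk-1)$.

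First, I would exploit the defining property of $\lbk$: by minimality, $I_j \ge P_0$ for all $j \in [1:\lbk-1]$. Consequently, for each $l \in [0:\lbk-1]$ the $l$-fold sum of consecutive powers $I[\lbk-l:\lbk-1]$ is at least $l P_0$, so $P_0 + I[\lbk-l:\lbk-1] \ge (l+1) P_0$. Setting $a = P_0/(1+I[\lbk:\infty])$, monotonicity of $C(\cdot)$ then gives
\[
\frac{1}{l+1}\, C\!\left(\frac{P_0 + I[\lbk-l:\lbk-1]}{1+I[\lbk:\infty]}\right) \;\ge\; \frac{1}{l+1}\, C\bigl((l+1)a\bigr).
\]

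The main analytic ingredient---and the step I expect to carry the weight of the proof---is the monotonicity of the map $x \mapsto \frac{1}{x}\log(1+xa)$ on $(0,\infty)$ for any fixed $a>0$. Differentiating, this reduces to the elementary inequality $(1+u)\ln(1+u) > u$ for $u>0$, which holds because both sides vanish at $u=0$ while the derivative of the left-hand side, $\ln(1+u)+1$, strictly exceeds that of the right, namely $1$. Applying this with $x = l+1$ and $x = \lbk$ (noting $l+1 \le \lbk$), I get
\[
\frac{1}{l+1}\, C\bigl((l+1)a\bigr) \;\ge\; \frac{1}{\lbk}\, C(\lbk a) \;=\; \frac{1}{\lbk}\, C\!\left(\frac{\lbk P_0}{1+I[\lbk:\infty]}\right).
\]

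Chaining the two inequalities, each of the $\lbk$ JD$(\lbk-1)$ constraints (\ref{eq:JD_cond}) is implied by the single hypothesis (\ref{eq:boundlp}), proving the second statement. For the first statement, the symmetric-rate characterization (\ref{eq:infinity}) gives $R_\sym \ge \xi(\lbk-1)$, where $\xi(\lbk-1)$ is exactly the minimum on the right-hand side we just bounded from below. Since (\ref{eq:boundlp}) ensures $R < \xi(\lbk-1) \le R_\sym$, the rate $R$ is achievable by OPT, completing the proof.
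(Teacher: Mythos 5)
Your proof is correct and follows essentially the same route as the paper's: pick $k=\lbk-1$ in the outer maximization, lower-bound the in-disk interference powers by $P_0$ using the minimality of $\lbk$, and then invoke the monotone decrease of $C(x)/x$ to identify $l=\lbk-1$ as the worst JD constraint. The only cosmetic differences are that you verify the monotonicity of $x\mapsto\frac1x\log(1+xa)$ from first principles (the paper simply cites it as a known fact about $C(x)/x$) and that you present the argument as ``JD$(\lbk-1)$ holds, hence OPT holds'' rather than directly chaining inequalities into a lower bound on $R_\sym$; both orderings carry the identical content.
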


\begin{proof}
We can derive a lower bound for the symmetric rate in (\ref{eq:infinity}) in terms of $\lbk$:
\begin{align*}
& \max_{k \in [0:K]} \min_{l \in [0:k]} \frac{1}{l+1} C\left( \frac{P_0 + I[k-l+1:k]}{1 + I[k+1:K]}\right)\\
& \ge \min_{l \in [0:\lbk-1]} \frac{1}{l+1} C\left( \frac{P_0 + I[\lbk-l:\lbk-1]}{1 + I[\lbk:K]}\right)\\
& \ge \min_{l \in [0:\lbk-1]} \frac{1}{l+1} C\left( \frac{(l+1)P_0}{1 + I[\lbk:K]}\right)\\
& =  \frac{1}{\lbk} C\left( \frac{\lbk P_0}{1 + I[\lbk:K]}\right).
\end{align*}
The first inequality is obtained by choosing $k=\lbk-1$ in the outer maximization; the second inequality is obtained by lower bounding the received powers of all the interferers with index $\le \lbk$ by $P_0$; the last equality follows from the fact that $C(x)/x$ is a monotonically decreasing function of $x$. The sufficient condition (\ref{eq:boundlp}) for achievability is now obtained by requiring the target rate $R$ to be less than this lower bound.
\end{proof}

Lemma \ref{cor1} gives a guideline on how to select the set of interferers to jointly decode: under the condition (\ref{eq:boundlp}), the success of joint decoding at rate $R$ is guaranteed when decoding all interferers with a received power larger than that of the tagged transmitter.  This is only a bound, however, and as we will see in the simulation section, one can often succeed in decoding more than $\lbk-1$ transmitters.


\section{Spatial Network Models and Simulation Results}
The aim of the simulations we provide in this section is to
illustrate the performance improvements of OPT versus IAN and JD.
The framework chosen for these simulations is a spatial network with
a denumerable collection of randomly located nodes. In the following section we also use this spatial network model  for mathematical analysis.
\subsection{Spatial Network Models}\label{SpatNet}
All the spatial network models considered below feature a
set of transmitter nodes located in the Euclidean plane.
The channel gains defined in Section \ref{Prelim}, or equivalently
the received signal power $P_0$ and the interference powers $I_j$, $j \in [1:K]$, at the tagged receiver
are evaluated using a path loss function $l(r)$, where $r$ is distance.
Here are two examples used in the literature (and in some examples below):
\begin{itemize}
\item $l(r)=r^{-\beta}$, with $\beta>2$ (case with pole),
\item $l(r)=(k+r)^{-\beta}$, with $\beta>2$ and $k$ a constant (case without pole); it makes
sense to take $k$ equal to the wavelength.
\end{itemize}

More precisely, if we denote the locations of the transmitters by $T_j$, $j \in [0:K]$ and that of the tagged receiver by $y$
and if we assume that the tagged receiver selects the interferers with the strongest received powers to be jointly decoded, then
$|g_{00}|^2=l(|T_0-y|)$ and $|g_{0j}|^2=l(|T_j-y|)$, or equivalently
$P_0 = l(|T_0-y|) Q $ and $I_j= l(|T_j-y|) Q$ for $j \in [1:K]$. Here $Q$ denotes the transmit power.
Since we assume that $I_1 \ge I_2 \ldots \ge I_K$, the strongest interferer is the closest one to $y$ (excluding the tagged transmitter).
Let $I(y)$ be the total interference at the tagged receiver, namely $I(y) =\sum_{j\ne 0}  I_j $.

The simulations also consider the following extensions of this basic model:
\begin{itemize}
\item The fading case, where the channel gain is further multiplied by $F_j(y)$, where
$F_j(y)$ represents the effect of fading from transmitter $j$ to $y$. In this case, the strongest
interferer is not necessarily the closest to $y$.
\item The case where the power constraint is not the same for all transmitters.
Then $P_0 = l(|T_0-y|) Q_0 $ and $I_j= l(|T_j-y|) Q_j$ for $j \in [1:K]$, with $Q_j$
the power constraint of transmitter $j$.
\end{itemize}

\subsection{IAN, SIC($1$), JD($1$), and OPT($1$) Cells}\label{sec:cells}
\subsubsection{Definitions}
Fix some rate $R$.
For each decoding rule $A$ (i.e., IAN, \ldots) as defined above,
let $\Xi^{A}$ be
the set of locations in the plane where the conditions for rule $A$ are met with respect to the tagged transmitter
and for $R$. We refer to this set as the $A$ cell for rate $R$. The main objects of interest are hence the cells
$\Xi^{\IAN}$, $\Xi^{\SIC(1)}$, $\Xi^{\JD(1)}$ and $\Xi^{\OPT(1)}$.

\subsubsection{Inclusions}
Rather than looking at the increase of rate obtained when moving from
a decoding rule to another, we fix $R$ and
compare the cells of the two decoding rules.
In view of the comparison results in Section \ref{sec:it}, we have
\begin{align*}
& \Xi^{\IAN} \subset \Xi^{\OPT(1)},\\
& \Xi^{\SIC(1)} \subset \Xi^{\JD(1)} \subset \Xi^{\OPT(1)},\\
& \Xi^{\OPT(1)}= \Xi^{\IAN} \cup \Xi^{\JD(1)}.
\end{align*}

For all pairs of conditions $A$ and $B$, we define  $\Xi^{A\setminus B}$ to be the set of
locations in the plane where the condition for $A$ is met but the condition for $B$ is not met.
For instance,
\begin{align}
\Xi^{\SIC(1)\setminus \IAN} &=  \Xi^{\SIC(1)}\setminus \Xi^{\IAN}.\nonumber
\label{eq:cellincr}
\end{align}

\subsubsection{Simulations}
In the simulation plots below,
the transmitters are randomly located according to a Poisson point process. The attenuation function
is of the form $l(r)=(1+r)^{-\beta}$ or $l(r)=r^{-\beta}$.

Figure~\ref{fig:sim1} compares $\Xi^{\SIC(1)}$ and $\Xi^{\OPT(1)}$.
Notice that SIC does not increase the
region that is {\em covered} compared to IAN, whereas OPT(1) does.

\begin{figure}[h]
\begin{center}
\includegraphics[width=0.33\textwidth]{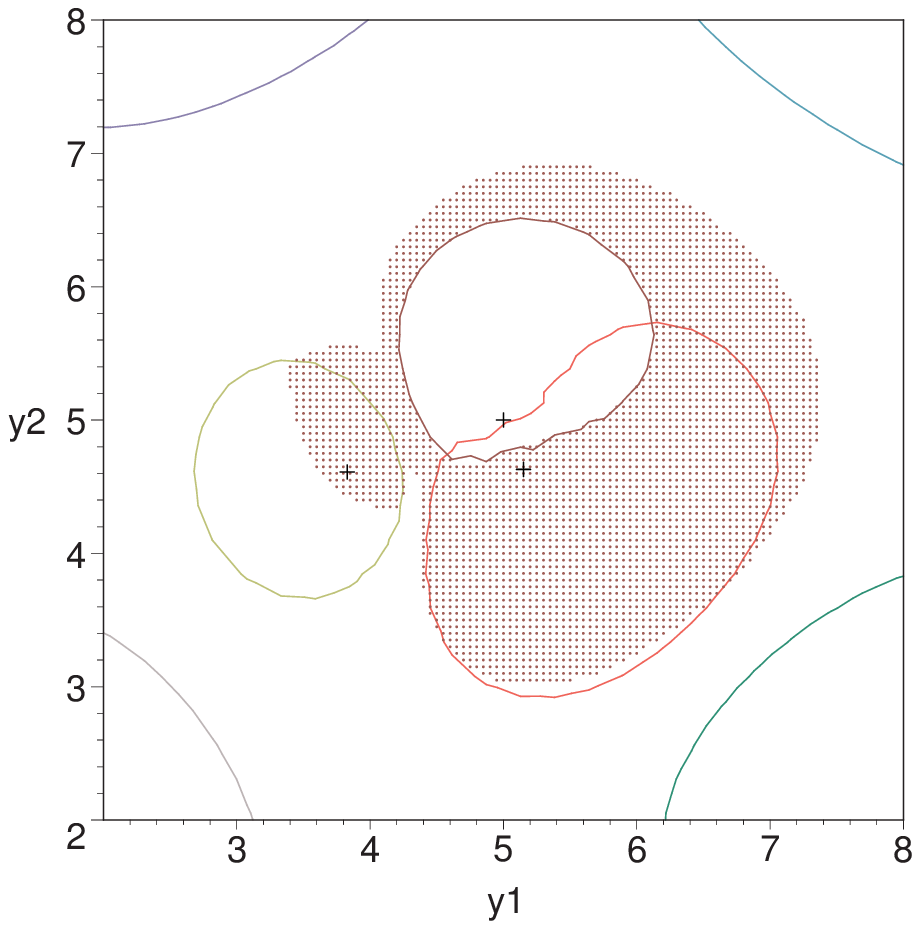}
\includegraphics[width=0.33\textwidth]{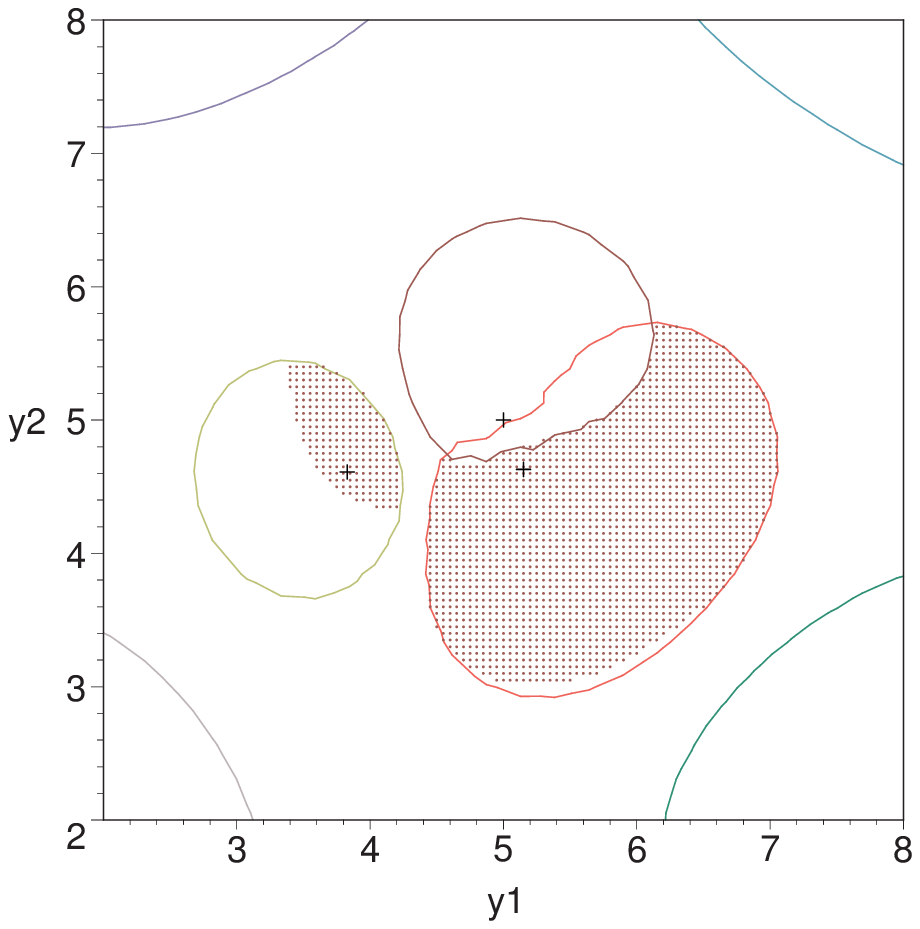}
\end{center}
\caption{The top plot is for
$\Xi^{\OPT(1)\setminus \IAN}$ and the bottom plot is for
$\Xi^{\SIC(1)\setminus \IAN}$ for the tagged transmitter. The transmitters are denoted by crosses.
The contours denote the boundaries of the IAN cells of different transmitters.
The spatial user density is 0.1. The power constraints $Q_i$ are here
randomly chosen according to a uniform  distribution over $[0,2000]$. Variable transmission powers
show up when devices are heterogeneous or power controlled. $R=0.73$ bits/s/Hz and $\beta=3$.
The tagged transmitter is at the center of the plot (at $[5,5]$). The attenuation is $l(r)=(1+r)^{-\beta}$.
}
\label{fig:sim1}
\end{figure}

Figure~\ref{fig:sim2} compares OPT($1$) to JD($1$) and IAN. Note that there is no gain moving from JD($1$)
to OPT($1$) outside the IAN cell.  Also, in such a spatial network, one of
the practical weaknesses of JD($1$) is its lack of coverage continuity (the
JD($1$) cell has holes and may even lack connectivity as shown in
the plots). These holes are due to the unnecessary symmetry between
the tagged transmitter and the strongest interferer, which
penalizes the former.

\begin{figure*}[tbp]
\begin{center}
\includegraphics[width=0.3\textwidth]{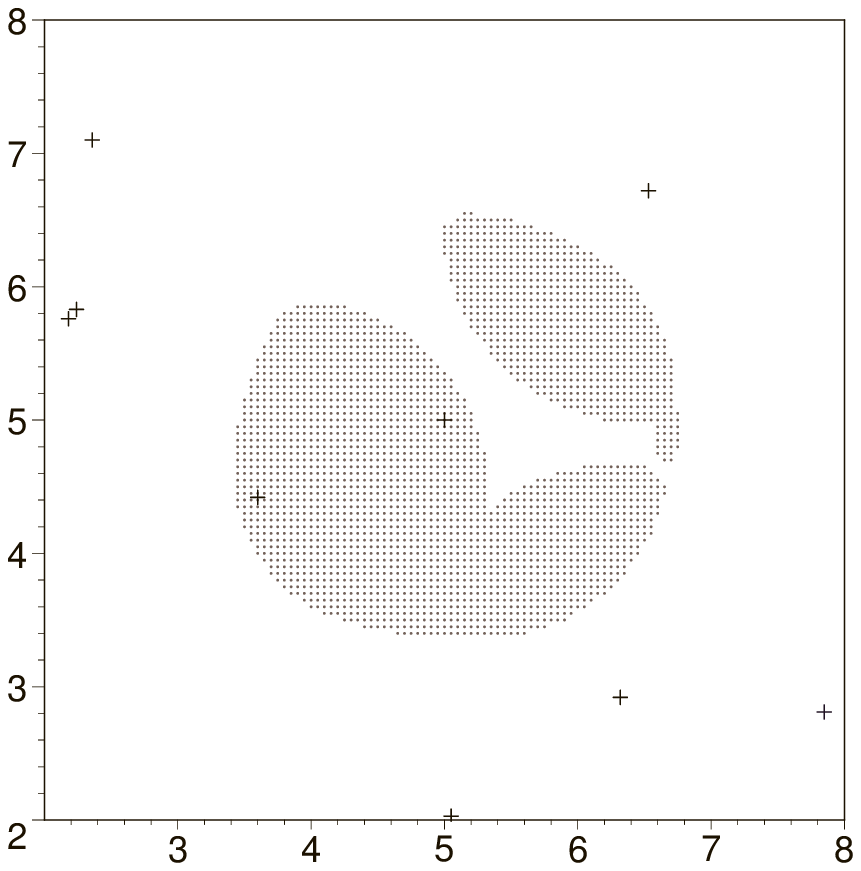}
\includegraphics[width=0.3\textwidth]{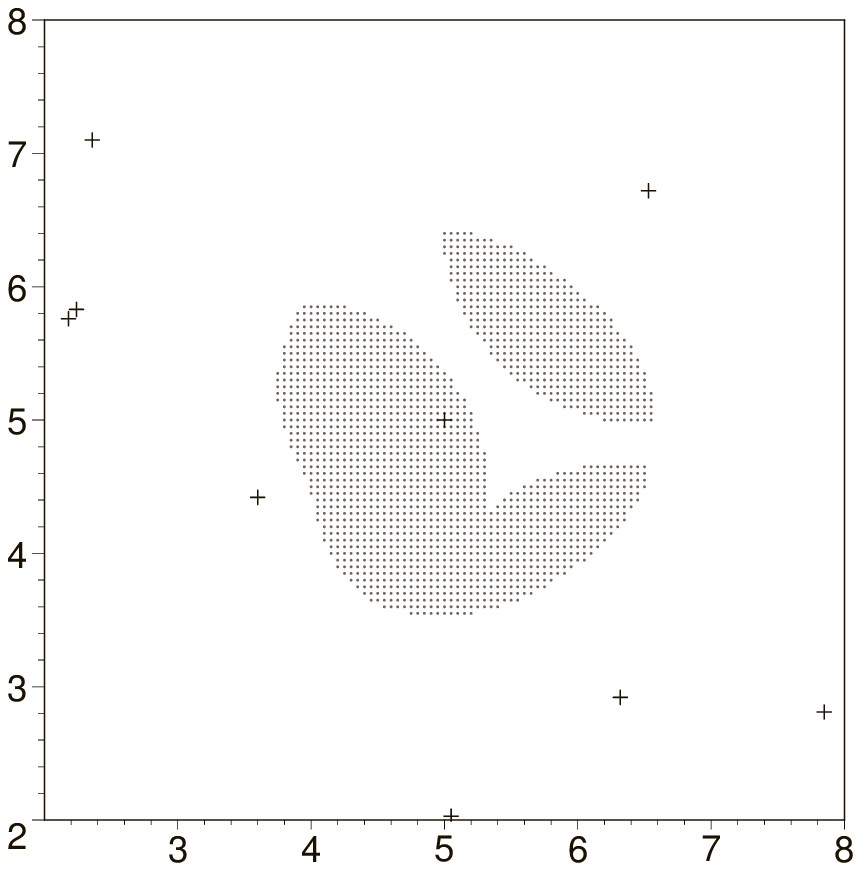}\\
\vspace{.2cm}
(i) \hspace{5cm} (ii)\\
\vspace{.2cm}
\includegraphics[width=0.3\textwidth]{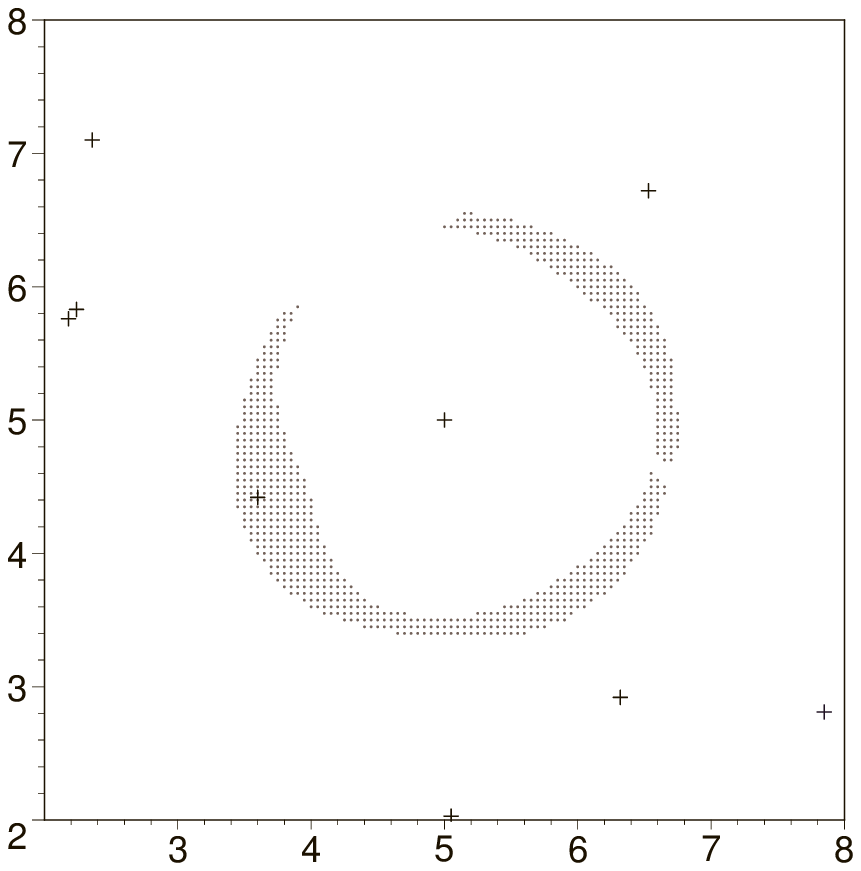}
\includegraphics[width=0.3\textwidth]{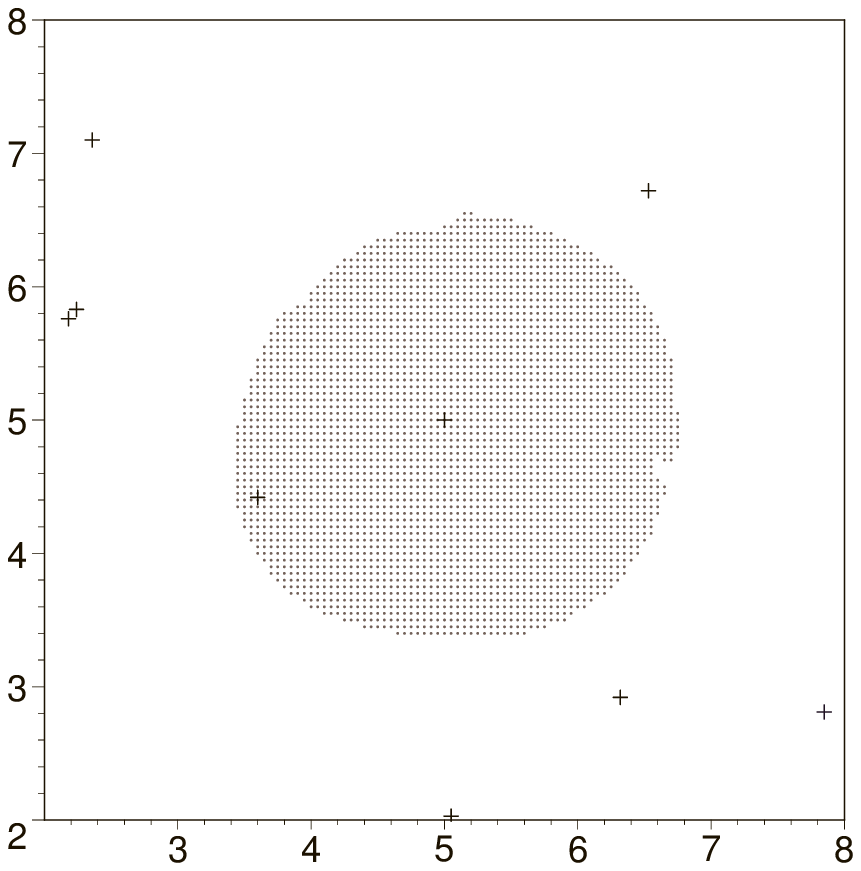}\\
\vspace{.2cm}
(iii) \hspace{4.7cm} (iv)
\end{center}
\caption{
Figure (i) depicts $\Xi^{\JD(1)}$ for the tagged transmitter, located at $[5,5]$, and
Figure (ii) $\Xi^{\JD(1)}\cap \Xi^{\IAN}$.
Figure (iii) shows $\Xi^{\JD(1)\setminus \IAN}$ and
Figure (iv) $\Xi^{\OPT(1)}$. The path loss exponent is $\beta=2.5$, the
power constraint is $Q=100$ for all users; the threshold is $R = 0.2$ bits/s/Hz, and the user density
is $\lambda=0.3$. The attenuation is $l(r)=(1+r)^{-\beta}$.}
\label{fig:sim2}
\end{figure*}

\subsection{SIC($1$) versus OPT($1$)}
There are interesting differences between SIC($1$) and OPT($1$). Let
\[
\Theta^{A}= \bigcup_j \ \Xi_j^A,
\]
where $\Xi_j^A$ is the cell of transmitter $j$ using decoding rule $A$ (IAN, OPT$(1)$, SIC$(1)$).
Also let
\[
\nu^{A}(y) =\sum_j 1_{y \in  \Xi_j^A}
\]
denote the number of transmitters covering location $y$ under condition $A$. Consider the following observations.
\begin{enumerate}
\item We have
\begin{equation}
\Theta^{\SIC(1)}=\Theta^{\IAN},
\end{equation}
that is, the region of in the plane
 covered when treating interference as noise
is \emph{identical} to that of successive interference cancellation. This
follows from the condition for SIC(1), which implies that the location under consideration
is included in the cell of another transmitter in the symmetrical rate case. The gain of SIC(1)
is hence only in the {\em diversity} of transmitters that can be received at any location $y$, i.e.,

\begin{equation}\nu^{\SIC(1)}(y)\ge \nu^{\IAN}(y) \text{ for every $y$}.
\end{equation}
\item We have
\begin{equation}
\Theta^{\OPT(1)}  \supset \Theta^{\IAN}.
\end{equation}
As we see in Figure~\ref{fig:sim1}, this inclusion is strict for some parameter values, that is,
 {\em optimal decoding increases global coverage, whereas successive
interference cancellation does not}.
We also have
\begin{equation}
\nu^{\OPT(1)}(y)\ge \nu^{\IAN}(y).
\end{equation}
\item Finally, we have
\begin{equation}
\Theta^{\OPT(1)} \supset \Theta^{\SIC(1)}.
\end{equation}
There is no general comparison between $\nu^{\SIC(1)}(y)$ and $\nu^{\OPT(1)}(y)$, however.
\end{enumerate}
\subsection{The OPT($k$) Cell}
We now explore the performance of OPT($k$), that is, when the tagged receiver jointly decodes up to the strongest $k$ interferers and treats the rest as noise. In Figure~\ref{fig:sim3}, we give samples of the region
$\Xi^{\OPT(2)\setminus\OPT(1)}$, which is
the additional area covered by moving from OPT($1$) to OPT($2$).

\begin{figure*}[tbp]
\begin{center}
\includegraphics[width=0.3\textwidth,angle=-90]{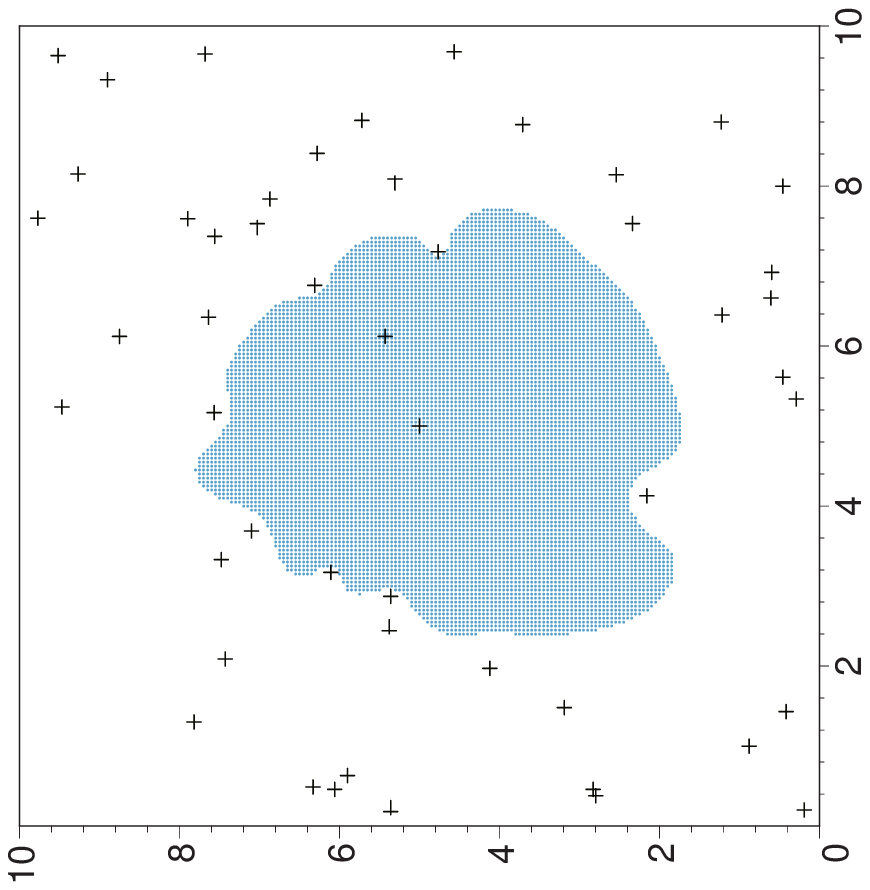}
\includegraphics[width=0.3\textwidth,angle=-90]{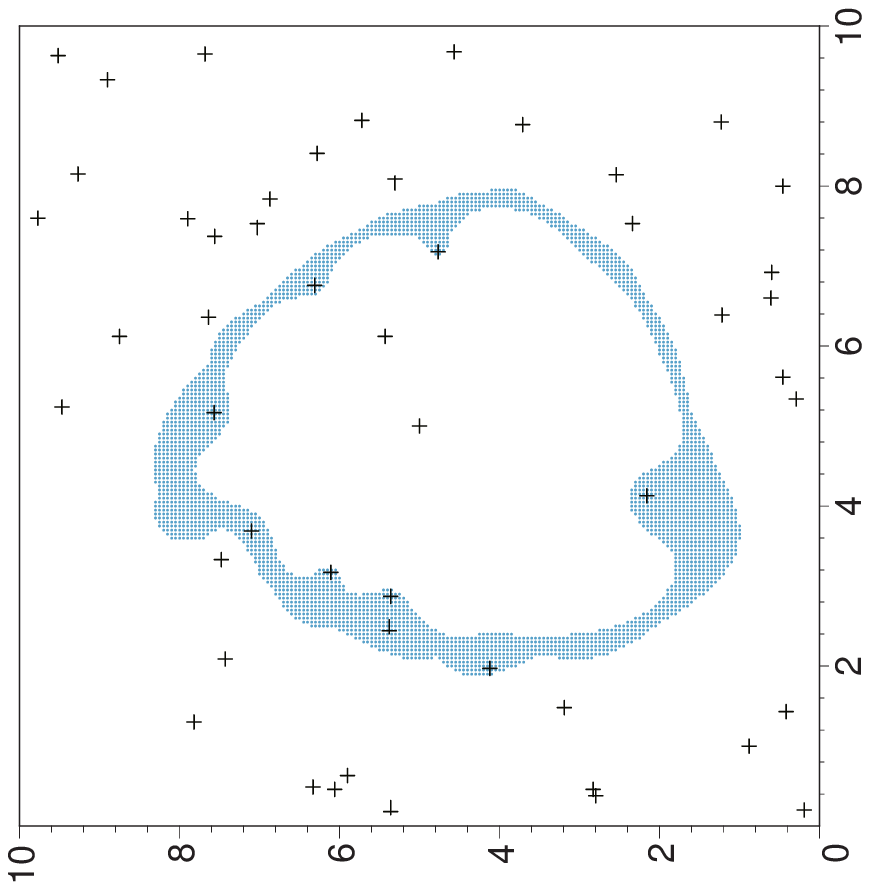}
\includegraphics[width=0.3\textwidth,angle=-90]{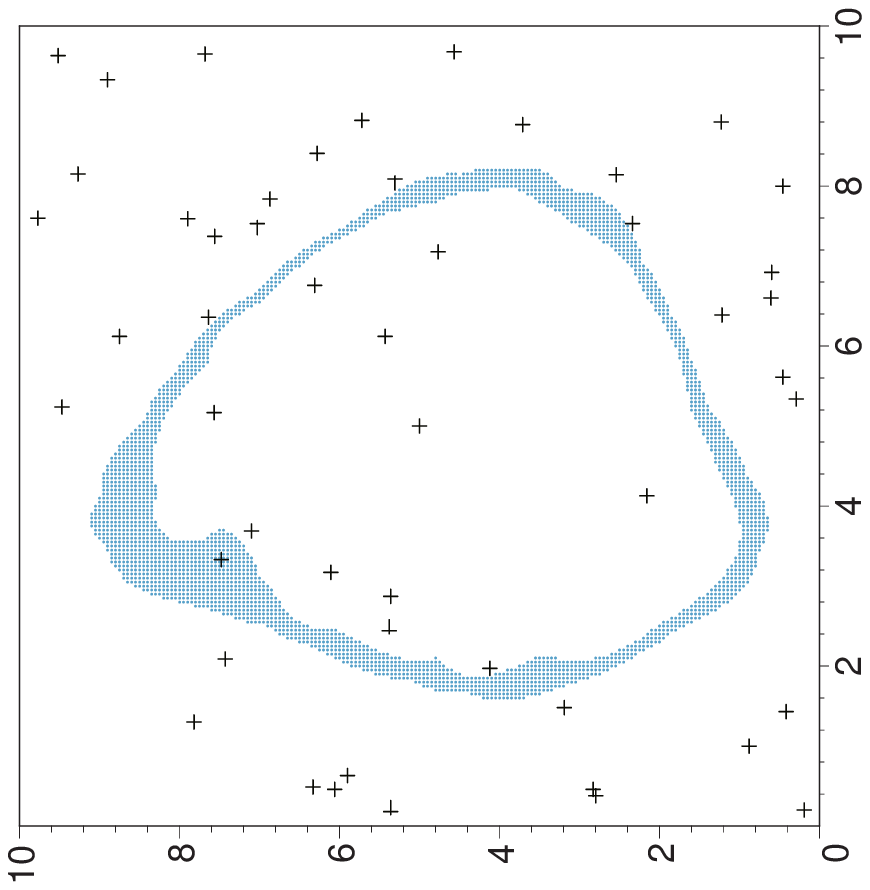}\\
\vspace{.2cm}
(i) \hspace{4.85cm} (ii) \hspace{4.85cm} (iii)\\
\vspace{.2cm}
\includegraphics[width=0.306\textwidth]{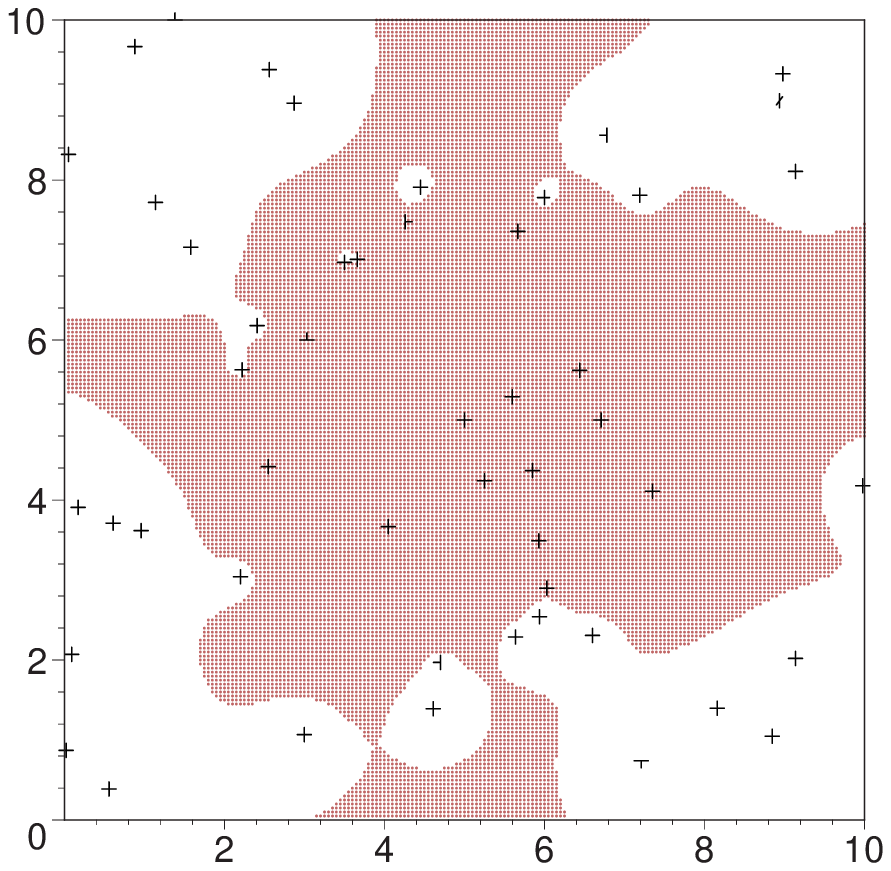}
\includegraphics[width=0.306\textwidth]{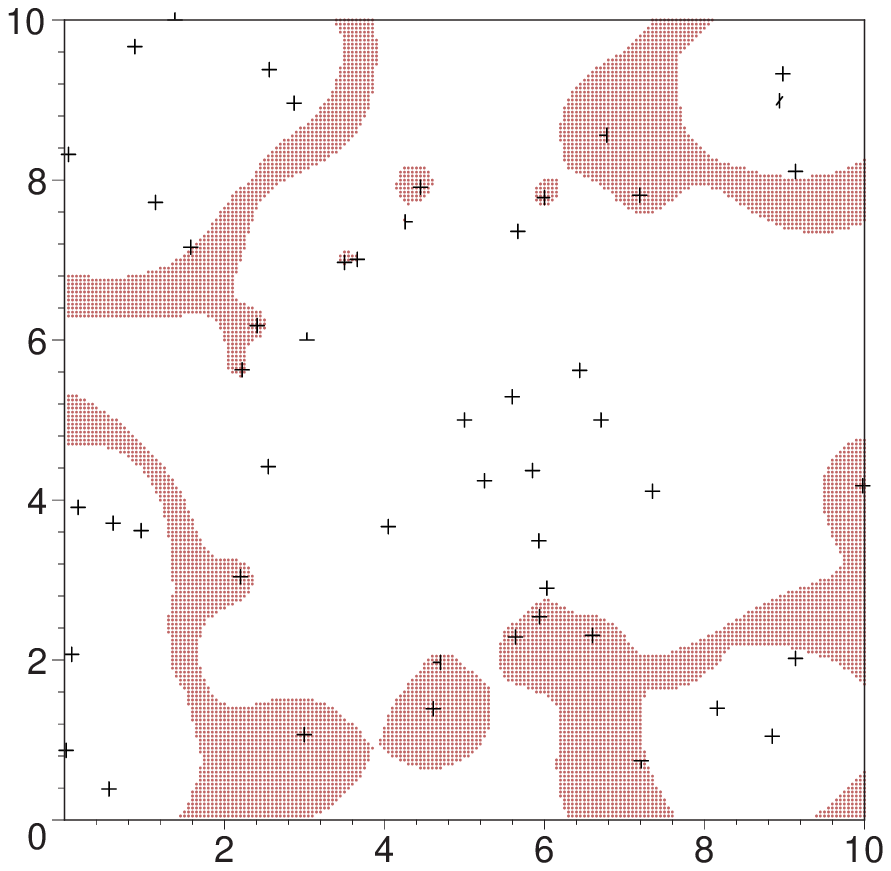}
\includegraphics[width=0.306\textwidth]{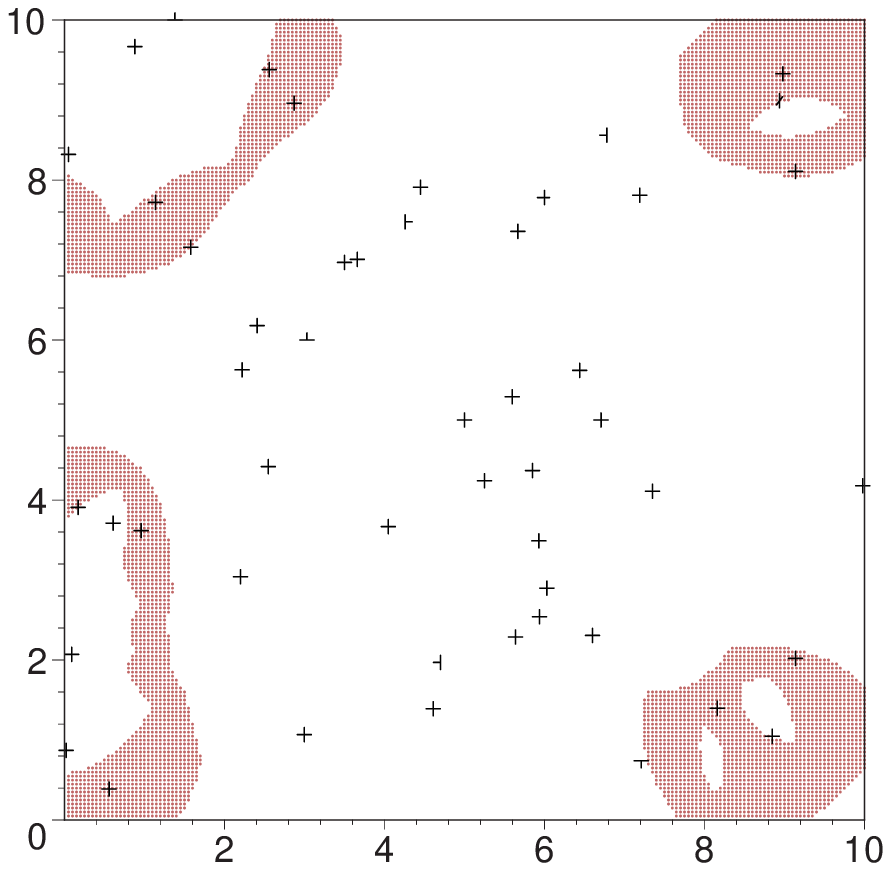}\\
\vspace{.2cm}
(iv) \hspace{4.85cm} (v) \hspace{4.85cm} (vi)
\end{center}
\caption{On (i) and (iv),
the dashed region is $\Xi^{\IAN}$ for the tagged transmitter.
The dashed regions of (ii) and (v) give $\Xi^{\OPT(1)\setminus \IAN}$ for the tagged transmitter;
those of (iii) and (vi) give
$\Xi^{\OPT(2)\setminus \OPT(1)}$ for the tagged transmitter.
The spatial user density is 0.5 and the power constraints
are constant and all equal to $Q= 100$. Here $\beta=3$. The tagged transmitter
is at the center of the plot. The attenuation is $l(r)=(1+r)^{-\beta}$.
The top plots are for $R = 0.03$ bits/s/Hz  and the bottom ones are for $R =0.015$ bits/s/Hz.}
\label{fig:sim3}
\end{figure*}

Since $\Xi^{\OPT(k)}\subset \Xi^{\OPT(k+1)}$ for all $k$,
there exists a limit set $\Xi^{\OPT(\infty)}$, which is the set of locations
where the tagged receiver can decode the message of the tagged transmitter jointly
with some set of other interferers messages at rate $R$
(the existence follows from monotonicity and a boundedness argument using the
assumption that the noise power is positive).
\subsection{Optimal Number of Interferer Messages Decoded}\label{sec:optsim}
We now illustrate the optimal number of jointly decoded interferer messages $k_{\mathrm {opt}}$ defined in
(\ref{eq:optimalk}). Consider transmitters distributed according
a spatial Poisson process with intensity $\lambda=10$.
Assume there is no fading. In {\em {Scenario I}},
the attenuation function is that with a pole and we assume that the path loss exponent is $\beta=3$.
Each transmitter has a power constraint of $Q= 100$ and $P_0=5$ (this means that the distance between the
tagged transmitter and its receiver is appr. 2.71.).
Figure \ref{fiex1-1} plots the function
$k \to \xi(k)$ defined in (\ref{eq:xifunction}) for a sample of a Poisson point process
of interferers. The maximum is reached for
$k_{\mathrm {opt}}=230$ (we just plot the informative part of the curve here)
and $R_\sym$ is approximately 0.00595. The number of interferers with power larger than $P_0$
is $229$. {\em {Scenario II}} is the same but for the attenuation function without pole.
In this case, the number of transmitters with a power larger than $P_0$
is $90$ and $k_{\mathrm {opt}}=90$.
\begin{figure}[h]
\begin{center}
\includegraphics[width=0.24\textwidth]{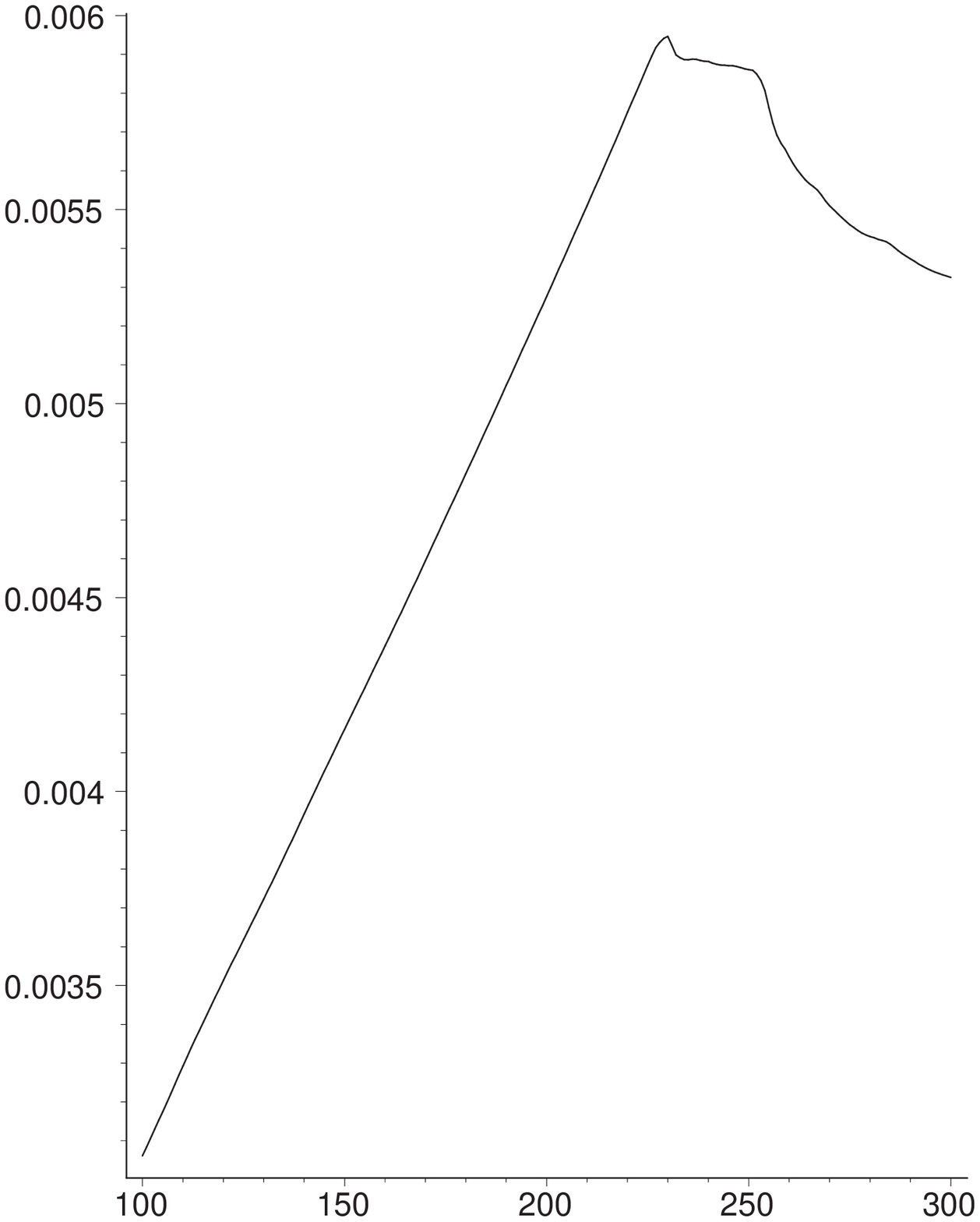}
\includegraphics[width=0.24\textwidth]{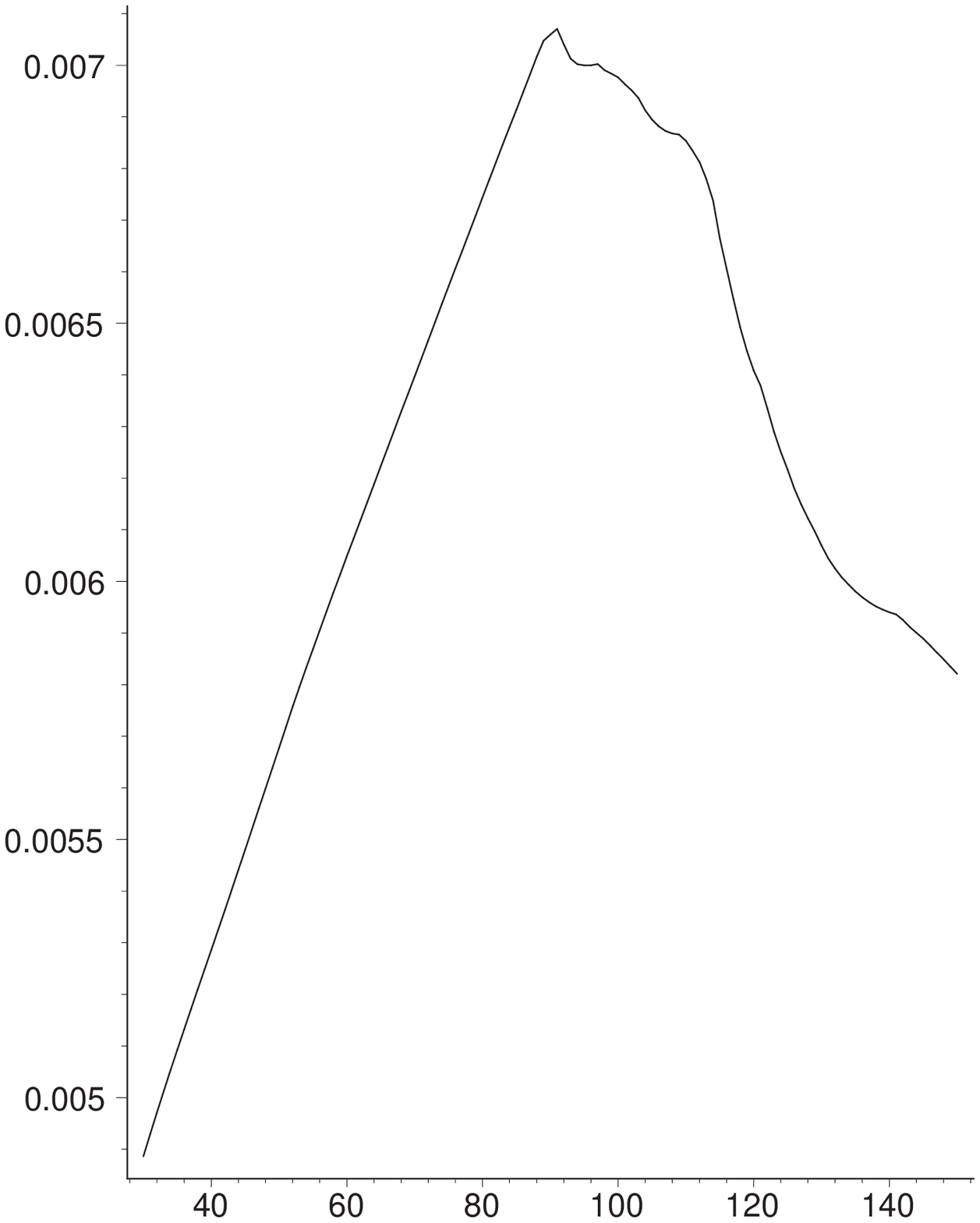}
\end{center}
\caption{A sample of the function $k\to\xi(k)$ for a Poisson collection of interferers.
The $x$-axis is that of the $k$ variable.
The maximum of this function provides $k_{\mathrm {opt}}$. Left: the attenuation is $l(r)=(1+r)^{-\beta}$.
Right: $l(r)=r^{-\beta}$.
}
\label{fiex1-1}
\end{figure}

\subsection{Single Hop in Ad Hoc Networks}\label{sec:scahn}

In order to further illustrate the differences between IAN and OPT,
we consider a simulation setting extending that considered above.
We assume a tagged transmitter and its receiver and
a collection of other transmitters located according to a Poisson point
process of intensity $\lambda$ that represent the nodes of an ad hoc network
that interfere with the tagged transmission.
We wish to compare the largest distance between the tagged transmitter and
its receiver under IAN and OPT.

To do so, we first fix a distance between the tagged transmitter and its receiver,
which determines $P_0$ (as in the last subsection).
We use a Monte Carlo simulation to determine the optimal number of decoded transmitters $k_{\mathrm {opt}}$
for a sample of the Poisson point process of interferers.
We then use (\ref{eq:infinity}) to determine the largest possible achievable
rate $R$ under OPT.

We then consider the largest distance $r$ between the tagged transmitter and receiver
such that the rate $R$ is achievable using IAN, i.e., such that $R< C(Qr^{-\beta}/(1+I))$,
for the interference $I$ created by the same point process of interferers as above.

Figure \ref{fiex1-2} shows the locations of the interferers (obtained
by sampling a Poisson point process), the tagged receiver
(located at the center) and the tagged transmitter (at the
other end of the long segment). The setting is that of Scenario I
of Subsection \ref{sec:optsim}.

The long segment represents the distance between the tagged transmitter to its receiver (tagged link)
under OPT. Its length is approximately 2.71. The short segment
(displayed here for comparison) is the longest possible IAN link
to the same receiver for the same rate $R=0.00595$ of the OPT link.
The length of the latter link is 0.543. Hence, for this setting,
about five times longer single hops can be supported at rate
$R$ when moving from IAN to OPT.

\begin{figure}[ht]
\centering
\includegraphics[width=0.45\textwidth,height=0.45\textwidth]{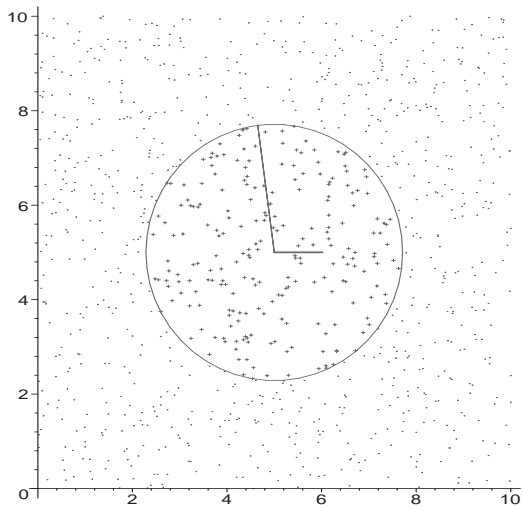}
\caption{Comparison of IAN and OPT in an ad hoc network. The receiver is at the center
of the plot (at [5,5]).
The long segment is a link of fixed length using OPT. The rate $R$ is the largest rate that $R_\sym$
can be sustained on this link.
The short segment gives the longest
link that can be sustained at the same rate $R$ under IAN at this receiver.
The same thermal noise at the receiver, the
transmission power at the transmitters and the set of interferers are the same in the
two cases.
The set of red points (inside the circle) is the optimal set of transmitters that are
jointly decoded by the receiver under OPT, whereas
the set of blue points (outside the circle) features the other transmitters that the tagged receiver
considers as noise. The attenuation is $l(r)=r^{-\beta}$.
}
\label{fiex1-2}
\end{figure}

\section{Asymptotic Analysis in the Wideband Regime}

\newcommand{\bP}{{\bar P}}
\newcommand{\bR}{{\bar R}}
\newcommand{\bQ}{{\bar Q}}

This section is devoted to the analysis of the gain
offered by using the optimal decoder OPT($K$) compared to IAN in large networks using the stochastic
geometry approach \cite{BB09-1}.
The setting is that of Section \ref{secSD}, namely we
consider a tagged transmitter-receiver pair  and a denumerable collection of interferers.
We assume here that these interferers are located according to
some homogeneous Poisson process in the plane.
We focus on the wideband regime, where all users share a large bandwidth and
the density of users is large.

More precisely, the wideband limit is the regime where the bandwidth $B \to \infty$, the average transmit power is fixed at $\bQ$ Watts, and the target data rate for each transmitter is fixed at $\bR$ bits/s. This means that the transmit power per Hz $Q=\bQ/B$ and the data rate per Hz $R= \bR/B$ both tend to zero as $B \to \infty$. We also assume that the tagged transmitter and receiver are at a fixed {\em non-random} distance $r_0$ from each other, so that the {\em received} power from the tagged transmitter at the tagged receiver is fixed at $\bP_0 = l(r_0) \bQ$ Watts. On the other hand, the received power from interferer $j$, $\bI_j = l(T_j-y) \bQ$, is random. The noise power is assumed to be $1$ Watt/Hz, as before.

As the bandwidth $B$ increases, we would like the network to support an increasing user density $\lambda_B$ so that the spectral efficiency of the system is kept non-vanishing. The following two theorems compare the performance of the IAN decoder and the OPT decoder in terms of how fast the density $\lambda_B$ can scale with the bandwidth $B$ while still reliably decoding the tagged transmitter's message.

\begin{theorem}
\label{thm:ian_wb}
Consider the path loss model  $l(r)=r^{-\beta}$. If $\lambda_B=\kappa B^{p}$ with
$p > \frac 2 \beta$ and $\kappa>0$, then
for every target rate  $\bar R>0$, the
IAN decoding condition for the tagged receiver cannot be satisfied almost surely as $B$ grows.
\end{theorem}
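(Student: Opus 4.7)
The plan is to reduce the IAN decoding condition in the wideband regime to a concrete requirement on the scale of the interference, and then to show that the nearest interferer alone forces the interference past this scale when $p>2/\beta$.

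First, I will substitute the wideband scaling into the IAN condition. Writing $\bar P_0 = l(r_0)\bar Q$ for the received power (in Watts) from the tagged transmitter and $\bar I = \bar Q \sum_{j\ge 1}|T_j-y|^{-\beta}$ for the total received interference power, the per-Hz quantities are $P_0 = \bar P_0/B$ and $I = \bar I/B$, so the IAN condition becomes
\[
\frac{\bar R}{B} < \log\!\Big(1+\frac{\bar P_0}{B+\bar I}\Big).
\]
Applying $\log(1+x)\le x/\ln 2$, any $B,\bar I$ for which IAN holds must satisfy $\bar I/B < M^*$, where $M^* := \bar P_0/(\bar R\ln 2) - 1$ is a constant depending only on the fixed $\bar R$ and $\bar P_0$ (the case $M^* \le 0$ is trivial, since then IAN fails even without interference). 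Hence it suffices to show $\bar I/B \to \infty$ as $B \to \infty$.

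Next, I will lower-bound $\bar I$ by retaining only the single nearest-interferer term, $\bar I \ge \bar Q\, r_{\min}^{-\beta}$, where $r_{\min}$ is the distance from the tagged receiver $y$ to the closest point of the Poisson process of intensity $\lambda_B=\kappa B^p$. The void probability $P(r_{\min}>r)=\exp(-\lambda_B\pi r^2)$ gives, for every fixed $M>0$,
\[
P\!\Big(\frac{\bar I}{B}\le M\Big) \le P\!\Big(r_{\min}\ge (\bar Q/(MB))^{1/\beta}\Big) = \exp\!\Big(-\kappa\pi\bar Q^{2/\beta} M^{-2/\beta} B^{\,p-2/\beta}\Big).
\]
Because $p>2/\beta$ the exponent tends to $-\infty$ as $B\to\infty$, giving $\bar I/B\to\infty$ in probability. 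To upgrade this to an almost-sure statement, I will evaluate along a discrete subsequence $B_n=2^n$ and apply Borel--Cantelli using the summability in $n$ of the doubly-exponential tails above, after fixing a coupling of the family of Poisson processes $\{\lambda_B\}$ (e.g., obtained by thinning a single dense reference process).

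The main obstacle is conceptual rather than computational: pinning down precisely what ``almost surely as $B$ grows'' means when the point process itself depends on $B$, and choosing a coupling that makes the a.s.\ claim meaningful. Once that is fixed, the entire argument rests on one elementary phenomenon---the pole of $l(r)=r^{-\beta}$ at the origin makes the interference extremely sensitive to nearby interferers, and when $p>2/\beta$ the nearest Poisson point sits at typical distance $\lambda_B^{-1/2}\sim B^{-p/2}$, so its received power scales like $B^{\,p\beta/2}$ and dominates $B$ by the polynomial margin $B^{\,p\beta/2-1}$.
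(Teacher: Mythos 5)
Your proof is correct, and it takes a genuinely different route from the paper's. The paper lower-bounds the aggregate interference $\bar I$ of the intensity-$\lambda_B$ process by the superposition sum $\sum_{j=1}^{\lfloor\lambda_B\rfloor}\bar I_j$ of i.i.d.\ unit-intensity shot noises, observes that each $\bar I_j$ is $\alpha$-stable with $\alpha=2/\beta$ and hence has infinite $p$-th moment for every $p>2/\beta$, and then invokes the Marcinkiewicz--Zygmund strong law of large numbers to get $\limsup_B\bar I/B=\infty$ almost surely. You instead isolate the \emph{single nearest} interferer, bound its contribution from below via the Poisson void probability, and turn the resulting probability decay into an almost-sure statement via Borel--Cantelli along the geometric subsequence $B_n=2^n$. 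Your route is more elementary (no stable laws, no Marcinkiewicz--Zygmund), and it in fact delivers a slightly stronger conclusion than the paper's displayed step: you obtain $\lim_{B\to\infty}\bar I/B=\infty$ a.s., whereas the $\limsup$ in the paper's proof directly certifies only that IAN fails along a subsequence of bandwidths. Both arguments tacitly rely on a monotone coupling of the family of point processes across $B$ --- your thinning of a single reference process on $\RR^2\times[0,\infty)$ is the same in spirit as the paper's superposition of unit-intensity processes --- and your explicit flagging of this point is the right instinct; the paper leaves it unstated. Two minor things worth writing out: (i) the interpolation from $B_n$ to arbitrary $B\in[B_n,B_{n+1}]$ (using $\bar I(B)\ge\bar I(B_n)$ and $B\le 2B_n$, so $\bar I(B)/B\ge\tfrac12\,\bar I(B_n)/B_n$), and (ii) taking $M$ over a countable unbounded set before intersecting the a.s.\ events, so that the conclusion $\bar I(B_n)/B_n\to\infty$ is justified.
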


\begin{theorem}\label{thesufcfeas}
Consider the path loss model $l(r)=r^{-\beta}$. If $\lambda_B=\kappa B$ for $\kappa>0$ and if
\begin{equation}
\label{eq:g-1}
\rho \bR < C \left(\frac{\rho \bP_0}{1+ 2\rho \bP_0 /(\beta-2)}\right)
= C \left(\frac{\rho \bP_0}{1+
\frac{ 2\pi\kappa \bQ }{r_0^{\beta}(\beta-2)}}\right),
\end{equation}
then almost surely the OPT decoding condition is satisfied as the bandwidth $B$ grows. Here, $\rho = \kappa \pi r_0^2$ is the expected number of interferers per Hz within the communication radius $r_0$ from the tagged receiver.
\end{theorem}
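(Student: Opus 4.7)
My plan is to reduce the theorem to Lemma \ref{cor1} applied to the Poisson configuration, and then verify the resulting sufficient condition in the wideband limit using a strong law of large numbers for functionals of the Poisson process.

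First I would set up per-Hz quantities: write $P_0 = \bar P_0/B$, $R = \bar R/B$, and $I_j = l(r_j)\bar Q/B$, where $r_j$ is the distance from the $j$-th interferer to the tagged receiver, and assume the $\{r_j\}$ are ordered so that the $I_j$ are nonincreasing. Since $l$ is strictly decreasing, $I_j > P_0$ iff $r_j < r_0$, so the quantity $\lbk = \min\{k \ge 1 : I_k < P_0\}$ satisfies $\lbk - 1 = |\Phi_B \cap B(y,r_0)|$, the number of interferers in the ball of radius $r_0$ around the tagged receiver. By Lemma \ref{cor1}, it suffices to show that, almost surely for all sufficiently large $B$,
\[
\lbk R < C\!\left(\frac{\lbk P_0}{1 + I[\lbk:\infty]}\right).
\]

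The core of the argument is then three almost sure limits as $B \to \infty$. To obtain these cleanly, I would couple the family $\{\Phi_B\}_{B \ge 1}$ on a common probability space, for instance by taking a Poisson process of intensity $\kappa$ on $\RR^2 \times \RR_+$ and letting $\Phi_B$ consist of the spatial projections of the points with mark at most $B$. With this construction, $|\Phi_B \cap B(y,r_0)|$ and the shot-noise $S_B := \sum_{r_j > r_0} l(r_j)$ are each a sum of $\lfloor B\rfloor$ i.i.d.\ contributions from independent intensity-$\kappa$ processes (plus a negligible boundary term), so the classical SLLN gives
\[
\frac{\lbk}{B} \;\longrightarrow\; \kappa \pi r_0^2 = \rho
\quad\text{and}\quad
\frac{S_B}{B} \;\longrightarrow\; \kappa \int_{r_0}^\infty r^{-\beta}\,2\pi r\,dr = \frac{2\pi\kappa\, r_0^{2-\beta}}{\beta-2}
\]
almost surely; the second integral is finite precisely because $\beta > 2$. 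Substituting, $\lbk R \to \rho \bar R$, $\lbk P_0 \to \rho \bar P_0$, and
\[
I[\lbk:\infty] \;=\; \frac{\bar Q}{B}\, S_B \;\longrightarrow\; \frac{2\pi\kappa \bar Q\, r_0^{2-\beta}}{\beta-2} \;=\; \frac{2\rho \bar P_0}{\beta-2},
\]
using $\bar P_0 = \bar Q r_0^{-\beta}$ and $\rho = \kappa \pi r_0^2$. By continuity of $C$, the right-hand side of the displayed sufficient condition converges a.s.\ to $C(\rho \bar P_0/(1+ 2\rho\bar P_0/(\beta-2)))$, and the strict inequality \eqref{eq:g-1} then guarantees the condition holds for all $B$ large enough on the full-measure event.

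The main obstacle is the almost sure convergence of the shot-noise $S_B/B$: individual realizations are heavy-influenced by points that happen to land close to $r_0$, and for a single $B$ the sum $\sum_{r_j > r_0} r_j^{-\beta}$ has finite mean but could have infinite variance were it not for the truncation at $r_0$ (which does bound the contribution of any single point by $r_0^{-\beta}$). With the coupling above, however, $S_B$ is literally a sum of $\lfloor B\rfloor$ i.i.d.\ bounded-mean summands, so Kolmogorov's SLLN applies directly; non-integer $B$ is handled by the obvious monotonicity of $S_B$ in $B$. Everything else is bookkeeping: verify that the limiting inequality matches the stated form, and check that the two expressions in \eqref{eq:g-1} agree via the identities $\rho = \kappa \pi r_0^2$ and $\bar P_0 = \bar Q r_0^{-\beta}$, which give $2\rho \bar P_0/(\beta-2) = 2\pi\kappa \bar Q/(r_0^\beta(\beta-2))$.
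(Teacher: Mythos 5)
Your proposal is correct and follows essentially the same route as the paper: reduce to Lemma \ref{cor1}, identify $\lbk-1$ with the number of interferers inside the disk of radius $r_0$, and apply the strong law of large numbers to $\lbk/B$ and to the truncated shot noise (whose mean, via Campbell's formula, gives the $2\rho\bP_0/(\beta-2)$ term), then pass to the limit in the strict inequality. The only difference is cosmetic: you decompose the Poisson process into $\lfloor B\rfloor$ i.i.d.\ intensity-$\kappa$ pieces via a marked coupling, whereas the paper uses $\lfloor \lambda_B\rfloor$ intensity-$1$ pieces.
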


Theorem \ref{thm:ian_wb} says that one needs a {\em sub-linear} scaling of the user density to guarantee a positive rate $\bR$ under the IAN decoder. In particular, the classical strong law of large numbers shows that
for all $\kappa>0$, the linear scaling $\lambda_B =\kappa B$ leads to a zero achievable rate under IAN. This implies that the network spectral efficiency in terms of total bits per second per Hz per unit area goes to zero under IAN. For the OPT decoder, on the other hand, Theorem \ref{thesufcfeas} says that a linear scaling of user density can be supported and hence a positive spectral efficiency can be achieved in the wideband limit.

Before proving the above theorems, we provide some intuition as to why the user density scalings of these two decoders differ quite dramatically. The situation is depicted in Figure \ref{fig:wb_fig}. The received interference power from each of the strong interferers inside the circle is larger than the signal power $P_0$. In fact, as the user density increases, there will be more and more interferers very close to the tagged receiver with much larger received powers than $P_0$. Their effect is fatal to the IAN decoder, which treats all interference as noise. The OPT decoder, on the other hand, can take advantage of these interferers' high received powers to jointly decode their messages together with that of the tagged transmitter. This effectively turns their interference power into useful signal energy. The proof of Theorem \ref{thesufcfeas} shows that the total useful received power from these strong interferers is at least comparable to the total harmful received power of the interferers outside the disk; hence reliable communication at a positive rate $\bR$ bits/s for the tagged receiver (and for everyone else). In fact, the term $\rho \bP_0$ in (\ref{eq:g-1}) is a lower bound on the total power per Hz received from the strong interferers, and the term $2\rho \bP_0 /(\beta-2)$ is the total power per Hz received from the weak interferers outside the disk.

\begin{figure}
\footnotesize
\begin{center}
\psfrag{a}[c]{interferers}
\psfrag{c}[c]{weak}
\psfrag{b}[c]{strong}
\psfrag{r}[b]{receiver 0}
\psfrag{t}[r]{transmitter 0}
\psfrag{r0}[b]{$r_0$}
\includegraphics[scale=0.5]{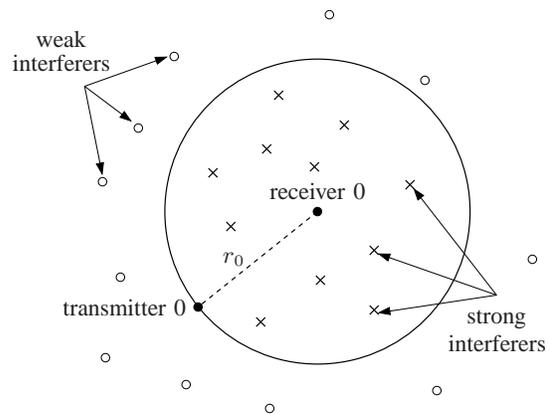}
\end{center}
\caption{The tagged transmitter and the tagged receiver are at a distance $r_0$ from each other. The strong interferers are within a distance of $r_0$ from the tagged receiver.}
\label{fig:wb_fig}
\end{figure}

We are now ready to prove the above theorems.

\begin{myproof}{of Theorem \ref{thm:ian_wb}}
The feasibility of the rate $\bR$ is equivalent to
\begin{equation}
\label{eq:ifffeas}
C\left(\frac{\bP_0} {B + \bar \JJJ}\right) \ge \frac {\bar R}{B},
\end{equation}
where $\bar \JJJ$ denotes the total interference power at the tagged receiver (in Watts)
in a Poisson network of intensity $\lambda_B$.
The last condition is equivalent to
\[
\frac {B \bP_0} {B+ \bar \JJJ}
= \frac {\bP_0} {1+ (\kappa/\lambda_B)^{1/p} \bar \JJJ} \ge B \left[2^{\bar R/B}-1\right].
\]

Let $n=\lfloor \lambda_B \rfloor$. We have $\bar \JJJ  > \sum_{j=1}^n \bar \JJJ_j$ with $\bar \JJJ_j$ a collection of i.i.d. shot noise processes of intensity 1. The random variable $\bar \JJJ_i$ has a stable distribution with parameter $2/\beta$ \cite{Haenggi10}, and hence its moments of order $p$ are infinite for $p>2/\beta$.
The Marcinkievicz-Siegmund strong law of large number \cite{Emb91} then implies that
for all $p>2/\beta$,
\[
\limsup_{n\to \infty}\frac 1 {n^{1/p}} \sum_{j=1}^n \bar \JJJ_j = \infty
\]
in an almost sure sense. Since $ B \left(2^{\bar R/B}-1\right) \to \bR \ln 2 > 0$, this shows that the condition (\ref{eq:ifffeas}) cannot hold true for sufficiently large $B$.
\end{myproof}

\begin{myproof}{of Theorem \ref{thesufcfeas}}
With the notation of Lemma \ref{cor1}, recall that $\lbk$  is the index of the first interferer whose received power at receiver $0$ is less than the received signal power $P_0$. Equivalently, $\lbk-1$ is the number of interferers in a disk of radius $r_0$ from receiver $0$.

From Lemma \ref{cor1}, for the rate $\bR$ to be feasible,
it is enough to show that
\begin{equation}
\label{eq:opt}
\frac{\lbk \bR}{B} < C\left(\frac{\lbk\bP_0 }{B + \bar I[\lbk:\infty]} \right).
\end{equation}
Now, almost surely, when $B$ tends to infinity,
\begin{equation}
\label{eq:kp}
\frac{\lbk}B \to \kappa \pi r_0^2
\end{equation}
and
\begin{equation}
\label{eq:truc}
\lim_{B \to \infty} \frac 1 B \bar I [\lbk:\infty]  =\kappa m(\bP_0) <\infty,
\end{equation}
with
\[
m(\bP_0)=\frac {2\pi r_0^2 \bP_0}{\beta-2}.
\]
In order to show (\ref{eq:truc}), we represent
$\bar \III[\lbk:\infty] $ as the sum of $n$ i.i.d.
random variables $\widetilde \JJJ_1,\ldots,\widetilde \JJJ_n$, where $\widetilde \JJJ_1$ is the
shot noise for the attenuation function $r^\beta$ and for
a spatial Poisson point process with intensity
1 outside a disk of radius $r_0$
and 0 inside. Since $\EE(\widetilde \JJJ_1)<\infty$, (\ref{eq:truc}) follows
from the strong law of large numbers. From Campbell's formula \cite{BB09-1}, we obtain
\[
m(\bP_0)=\EE(\widetilde \JJJ_1)=
\bar Q 2\pi \int_{r_0}^\infty \frac 1 {r^\beta} r dr =
\frac {2\pi r_0^2 \bar P_0}{\beta-2},
\]
using the fact that $\bP_0 = \bQ r_0^{-\beta}$. Substituting (\ref{eq:truc}) and (\ref{eq:kp}) into (\ref{eq:opt}) and simplifying yields the condition (\ref{eq:g-1}) for OPT to decode successfully almost surely for $B$ large enough.
\end{myproof}

Note that the linear user density scaling achieved by OPT cannot be achieved by the decoder OPT($k$) for any fixed $k$. One has to jointly decode the messages from an increasing number of interferers as the bandwidth and the user density  increase.

When the distance $r_0$ between the
tagged transmitter and its receiver tends to infinity, the received power $P_0\rightarrow 0$,
and (\ref{eq:g-1}) reads $\bR<\bP_0 \log e$, which is the
wideband capacity of a point-to-point Gaussian channel without interference.

\noindent {\em Remark:} This result may seem surprising at first glance. In the ad hoc network setting
of Section \ref{sec:scahn}, this result suggests that when OPT($k$) of
high order $k$ is used in a wideband system, one can maintain a channel from
a tagged transmitter to a tagged receiver, say at distance $r_0$,
with a positive rate (determined by Theorem \ref{thesufcfeas}) when
the user density tends to infinity. For instance, in the ad hoc setting
of the simulation section, this means that one can maintain simultaneous single hop
channels that "jump" over a very large number of nodes  of the
ad hoc network. In contrast, in the IAN case, in order to maintain the same rate,
one has to set a multihop route over a number
of relay nodes that tends to infinity as the user density tends
to infinity. However, this is perhaps not so surprising since in
this setting, one could in principle organize some sort
of TDMA or FDMA  IAN scheme (which silences a large collection of nodes when
the tagged transmission takes place) that has asymptotic performance
of the same kind as that exhibited by OPT. So, OPT can in fact be seen as a
way of obtaining good performance without a priori partitioning the users into different time or frequency slots.

Notice that the last comparison results rely on the assumption that
the loss function is the one with a pole.
In the case without a pole, we can obtain the following results using very similar arguments
based on the classical strong law of large numbers (and are easily extended to general
attenuation functions such that $\int_{\RR^+} l(r) r dr <\infty$).

\begin{theorem}\label{theorem:sanspole}
Consider the path loss model $l(r)=(k+r)^{-\beta}$. When $\lambda_B=\kappa B$,
and the bandwidth $B$ tends to infinity, the IAN decoding condition is satisfied almost surely iff
\begin{align}
\label{eq:g-2}
\bar R & <
C\left(\frac {\bP_0} {1 + 2 \rho \bP_0\int_0^\infty \left(\frac{k+r_0}{k+r}\right)^\beta \frac{r}{r_0^2} dr}\right)\nonumber\\
& =
C\left(\frac {\bP_0} {1 + 2 \pi \kappa \bQ \int_0^\infty \left(\frac{1}{k+r}\right)^\beta r dr}\right),
\end{align}
and the OPT condition almost surely if
\begin{align}
\label{eq:g-3}
\rho \bar R & <
C\left(\frac{\rho\bP_0}{1 + 2 \rho \bP_0\int_{r_0}^\infty \left(\frac{k+r_0}{k+r}\right)^\beta \frac{r}{r_0^2} dr}\right)\nonumber \\
& =
C\left(\frac{\rho\bP_0}{1 + 2 \pi \bQ \int_{r_0}^\infty \left(\frac{1}{k+r}\right)^\beta r dr}\right),
\end{align}
where again $\rho = \kappa\pi r_0^2$.
\end{theorem}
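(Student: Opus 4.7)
The plan is to adapt the arguments of Theorems~\ref{thm:ian_wb} and~\ref{thesufcfeas} to the bounded path loss $l(r)=(k+r)^{-\beta}$. The crucial simplification is that since $l$ is bounded on $[0,\infty)$, the shot-noise contribution from any finite-intensity Poisson process in the plane has a \emph{finite} first moment whenever $\beta>2$, so the classical strong law of large numbers (SLLN) applies in place of the Marcinkiewicz--Zygmund law used in the pole case. Both assertions in the theorem will then follow by computing almost-sure limits as $B\to\infty$ via SLLN applied to an i.i.d.\ decomposition of the Poisson process.

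For the IAN direction, I would start from the exact feasibility condition $\bR/B \le C(\bP_0/(B+\bI))$, equivalently $\bP_0/(1+\bI/B)\ge B(2^{\bR/B}-1)$, exactly as in Theorem~\ref{thm:ian_wb}. Decomposing the intensity-$\kappa B$ Poisson point process as a superposition of $\lfloor B\rfloor$ independent intensity-$\kappa$ copies, the total interference splits into $\lfloor B\rfloor$ i.i.d.\ shot-noise variables whose common mean, by Campbell's formula, equals $m := 2\pi\kappa\bQ\int_0^\infty (k+r)^{-\beta} r\,dr < \infty$ (integrability requires both the boundedness of $l$ at $0$ and $\beta>2$ at infinity). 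The classical SLLN then yields $\bI/B\to m$ almost surely. Since $B(2^{\bR/B}-1)\to \bR\ln 2$, the feasibility condition holds almost surely in the limit iff $\bR\ln 2 < \bP_0/(1+m)$, which is precisely~(\ref{eq:g-2}) under the standard wideband identification $C(x)\sim x/\ln 2$. The ``iff'' comes from the two-sided nature of almost-sure convergence.

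For the OPT direction, I would invoke Lemma~\ref{cor1}, whose sufficient condition reads $\lbk\bR/B < C((\lbk/B)\bP_0/(1+\bI[\lbk:\infty]/B))$ once quantities are expressed in Watts. In the no-pole model, the inequality $\bI_j > \bP_0$ is purely geometric, equivalent to $|T_j-y|<r_0$, so $\lbk-1$ equals the number of interferers inside the disk of radius $r_0$ about the tagged receiver; hence $\lbk/B\to \kappa\pi r_0^2 = \rho$ almost surely. Similarly, decomposing the Poisson process restricted to the exterior of this disk into $\lfloor B\rfloor$ i.i.d.\ intensity-$\kappa$ copies and applying SLLN with Campbell's formula gives $\bI[\lbk:\infty]/B \to 2\pi\kappa\bQ \int_{r_0}^\infty (k+r)^{-\beta} r\,dr$ almost surely. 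Substituting both limits into the bound of Lemma~\ref{cor1} and letting $B\to\infty$ yields~(\ref{eq:g-3}).

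The one delicate point is the joint convergence of $\lbk$ and $\bI[\lbk:\infty]$, both of which are random functionals of the Poisson configuration. What rescues the argument is that in the no-pole model the inequality $\bI_j<\bP_0$ reduces to the deterministic geometric condition $|T_j-y|>r_0$, so the set of weak interferers contributing to $\bI[\lbk:\infty]$ coincides with the Poisson process restricted to the (non-random) exterior of the disk of radius $r_0$, decoupled from $\lbk$ itself. This lets the two SLLN limits be computed cleanly and independently, without any correlation or boundary-effect technicalities.
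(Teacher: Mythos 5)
Your proposal follows exactly the route the paper itself indicates: the paper offers no proof of this theorem beyond the one-line remark that the arguments of Theorems~\ref{thm:ian_wb} and~\ref{thesufcfeas} go through with the classical strong law once the shot noise has finite mean, and your write-up fleshes that out correctly. The probabilistic core is sound: boundedness of $l$ at the origin plus $\beta>2$ gives $\int_0^\infty (k+r)^{-\beta}r\,dr<\infty$; the superposition into $\lfloor B\rfloor$ i.i.d.\ intensity-$\kappa$ copies plus Campbell's formula gives the almost-sure limits of $\bar I/B$ and $\bar I[\underline{k}:\infty]/B$; and your observation that in the no-pole model the weak interferers are exactly the Poisson points outside the deterministic disk of radius $r_0$ (so $\underline{k}$ and the residual shot noise decouple) is the right way to dispose of the joint-convergence issue. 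The OPT half lands cleanly on (\ref{eq:g-3}) because there the argument of $C(\cdot)$, namely $\underline{k}\bar P_0/(B+\bar I[\underline{k}:\infty])$, converges to the fixed positive constant $\rho\bar P_0/(1+\cdots)$, so no linearization of $C$ is needed.

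The one step that does not close is the final identification in the IAN half. Your (correct) limiting condition is $\bar R\ln 2<\bar P_0/(1+m)$ with $m=2\pi\kappa\bar Q\int_0^\infty(k+r)^{-\beta}r\,dr$, i.e.\ $\bar R<\bar P_0\log_2 e/(1+m)$; it arises because the per-Hz SINR $\bar P_0/(B+\bar I)$ vanishes while $B(2^{\bar R/B}-1)\to\bar R\ln2$. But (\ref{eq:g-2}) reads $\bar R<C(\bar P_0/(1+m))=\log_2(1+\bar P_0/(1+m))$, and here the argument $\bar P_0/(1+m)$ is a \emph{fixed} positive constant, so the ``standard wideband identification $C(x)\sim x/\ln2$'' you invoke does not apply. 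Since $\log_2(1+x)<x\log_2 e$ for $x>0$, the threshold in (\ref{eq:g-2}) is strictly smaller than the one your argument produces, so your proof delivers the ``if'' direction of the IAN claim but not the ``only if'': any $\bar R$ with $\log_2(1+\bar P_0/(1+m))\le\bar R<\bar P_0\log_2 e/(1+m)$ is IAN-feasible almost surely yet violates (\ref{eq:g-2}). You should either state the IAN threshold in the linearized form $\bar P_0\log_2 e/(1+m)$, for which your two-sided argument is complete, or flag that (\ref{eq:g-2}) as written is established only as a sufficient condition; the two thresholds coincide only in the regime $\bar P_0/(1+m)\to0$ discussed after the theorem.
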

As a direct corollary of the last formulas, when the distance $r_0$ between the
tagged transmitter and its receiver tends to infinity so that the received power $P_0\to 0$, the right hand side of (\ref{eq:g-3}) tends to zero like $\bP_0 \log e$, which is the wideband capacity of a point-to-point Gaussian channel without interference. On the other hand, the effect of interference never disappears for IAN. Hence, in this limiting regime, for the case without a pole and with a given linear user growth rate, a positive rate is feasible for both IAN and OPT in the limit, but with different values. When the tagged transmitter and receiver are far away from each other, the scaling of this feasible rate under OPT is as though there were no interferers.

\noindent{\em Remark:}
Above, we focused on the case where the node density tends to infinity. For the finite density case, the performance of the decoding strategies considered can be evaluated from the joint distributions of the total interference and of the order statistics $I_1,I_2,\ldots$ using the tools described in \cite{BB09-1}.

\section{Conclusion}
In this paper, we studied the optimal performance achievable in a Gaussian interference network when the transmitters are constrained to use uncoordinated point-to-point codes. While recent results have shown that to achieve the {\em ultimate} capacity of such networks, techniques such as superposition coding and interference alignment are needed, such techniques require significantly more complex codes and coordination between the transmitters. What our results suggest is that using simple point-to-point codes and no coordination between the transmitters, one can achieve quite significant gains. Moreover, since many existing wireless networks already use near-capacity-achieving point-to-point coding, our results also point to the possibility of significant performance gain from just upgrading the receivers and not the transmitters. This provides an evolutionary path to improving the performance of existing wireless networks. It would also be interesting to extend our results to establish the capacity region for MAC-capacity-achieving codes with limited coordination, such as frequency/time partitioning and power control.

An interesting future direction is to explore how to design a distributed medium-access protocol when receivers employ optimal decoding. There would be two important components to such a protocol. The first component is {\em interferer sensing} by the receivers. Each receiver senses the powers and the identities of its interferers. This can be implemented through some beaconing scheme. The second component is a  {\em backoff} procedure by the transmitter. Each user, when it has data to transmit, needs to sense when its receiver can  accommodate its transmission. This in turn depends on the number and powers of the interferers who are transmitting. In conventional protocols such as Carrier Sense Multiple Access (CSMA), transmission occurs when the level of interference is below a certain threshold. This makes sense for an IAN receiver. However, under optimal decoding, sometimes having a strong interferer is advantageous as it enables joint decoding. Hence, the backoff procedure will have to be more elaborate.

Another interesting direction is to explore the implementation of optimal decoding. When the number of interferers whose messages are jointly decoded becomes large, one might fear an exponential growth of the combination of codewords to be tested by each decoder when decoding. However, it is not completely clear that this exponential growth is necessary to achieve capacity. For example, at the corner points, SIC, with a complexity that only grows linearly with the number of decoded interferers, is sufficient. It may be possible to reduce the complexity of decoding for the points in the interior of the sum rate face as well.

\bibliographystyle{plain}

\end{document}